\documentclass[11pt]{article}


\usepackage{fullpage}




\usepackage[utf8]{inputenc} 
\usepackage[T1]{fontenc}    
\usepackage[colorlinks, citecolor=blue]{hyperref}
\usepackage{url}            
\usepackage{booktabs}       
\usepackage{amsfonts}       
\usepackage{nicefrac}       
\usepackage{microtype}      
\usepackage{xcolor}         

\usepackage{graphicx}
\usepackage{subcaption}
\usepackage{booktabs} 

\usepackage{amsmath}
\usepackage{amssymb}
\usepackage{mathtools}
\usepackage{amsthm}

\usepackage[capitalize,noabbrev]{cleveref}
\usepackage{algorithm}
\usepackage{algorithmic}

\theoremstyle{plain}
\newtheorem{theorem}{Theorem}[section]

\newtheorem{lemma}[theorem]{Lemma}

\theoremstyle{definition}
\newtheorem{definition}[theorem]{Definition}

\theoremstyle{remark}
\newtheorem{remark}[theorem]{Remark}

\usepackage[shortlabels]{enumitem}
\DeclareMathOperator*{\E}{\mathbb{E}}
\DeclareMathOperator*{\argmax}{arg\,max}
\DeclareMathOperator*{\argmin}{arg\,min}
\newcommand{\eps}{\varepsilon}

\newcommand{\R}{\mathbb{R}}

\makeatletter
\newcommand{\sandeep}[1]{\textcolor{blue}{[Sandeep\@ifnotempty{#1}{: #1}]}}
\newcommand{\ali}[1]{\textcolor{red}{[Ali\@ifnotempty{#1}{: #1}]}}
\newcommand{\fred}[1]{\textcolor{blue}{[fred\@ifnotempty{#1}{: #1}]}}

\newcommand{\Maxviol}{\text{MaxVi}}
\newcommand{\Meanviol}{\text{MeanVi}}
\newcommand{\violation}{\text{Vi}}

\makeatother

\newcommand{\la}{\leftarrow}

\DeclareMathOperator{\diam}{diam}

\title{Constant Approximation for Individual Preference Stable Clustering}

\author{Anders Aamand\thanks{Supported by DFF-International Postdoc Grant 0164-00022B from the Independent Research Fund Denmark.} \\ MIT \\ aamand@mit.edu \and Justin Y.\ Chen \\ MIT \\ \texttt{justc@mit.edu} \and Allen Liu \thanks{Supported in part by an NSF Graduate Research Fellowship and a Hertz Fellowship} \\ MIT \\ \texttt{cliu568@mit.edu} \and Sandeep Silwal\\ MIT  \\ \texttt{silwal@mit.edu} \and Pattara Sukprasert\thanks{Significant part of works was done while P.S. was a Ph.D. candidate at Northwestern University} \\ Databricks \\ \texttt{pattara.sk127@gmail.com}\and Ali Vakilian\\ TTIC\\ \texttt{vakilian@ttic.edu}  \and Fred Zhang \\ UC Berkeley  \\
  \texttt{z0@berkeley.edu}}
\date{}

\begin{document}

\maketitle

\begin{abstract}
    Individual preference (IP) stability, introduced by Ahmadi et al. (ICML 2022), is a natural clustering objective inspired by stability and fairness constraints. A clustering is $\alpha$-IP stable if the average distance of every data point to its own cluster is at most $\alpha$ times the average distance to any other cluster.
    Unfortunately, determining if a dataset admits a $1$-IP stable clustering is NP-Hard. Moreover, before this work, it was unknown if an $o(n)$-IP stable clustering always \emph{exists}, as the prior state of the art only guaranteed an $O(n)$-IP stable clustering. We close this gap in understanding and show that an $O(1)$-IP stable clustering always exists for general metrics, and we give an efficient algorithm which outputs such a clustering. 
    We also introduce generalizations of IP stability beyond average distance and give efficient, near-optimal algorithms in the cases where we consider the maximum and minimum distances within and between clusters.  
\end{abstract}
\newpage 

\section{Introduction}
In applications involving and affecting people, socioeconomic concepts such as game theory, stability, and fairness are important considerations in algorithm design.
Within this context, Ahmadi et al. \cite{ahmadi2022individual}  introduced the notion of {\em individual preference stability (IP stability)} for clustering. At a high-level, a clustering of an input dataset is called IP stable if, for each individual point, its average distance to any other cluster is larger than the average distance to its own cluster. Intuitively, each individual prefers its own cluster to any other, and so the clustering is stable.

There are plenty of applications of clustering in which the utility of each individual in any cluster is determined according to the other individuals who belong to the same cluster. For example, in designing {\em personalized medicine}, the more similar the individuals in each cluster are, the more effective medical decisions, interventions, and treatments can be made for each group of patients.
Stability guarantees can also be used in personalized learning environments or marketing campaigns to ensure that no individual wants to deviate from their assigned cluster. Furthermore, the focus on individual utility in IP stability (a clustering is only stable if every individual is ``happy'') enforces a sort of individual fairness in clustering.

In addition to its natural connections to {cluster stability}, {algorithmic fairness}, and {Nash equilibria}, 
IP stability is also algorithmically interesting in its own right. While clustering is well-studied with respect to global objective functions (e.g. the objectives of centroid-based clustering such as $k$-means or correlation/hierarchical clustering), less is known when the goal is to partition the dataset such that {every point} in the dataset is {individually satisfied} with the solution. Thus, IP stability also serves as a natural and motivated clustering framework with a non-global objective.

\subsection{Problem Statement and Preliminaries}
The main objective of our clustering algorithms is to achieve IP stability given a set $P$ of $n$ points lying in a metric space $(M,d)$ and $k$, the number of clusters. 
\begin{definition}[Individual Preference (IP) Stability \cite{ahmadi2022individual}]\label{def:IP}
The goal is to find a disjoint $k$-clustering $\mathcal{C}=(C_1,\cdots, C_k)$ of $P$ such that every point, {\em on average}, is closer to the points of its own cluster than to the points in any other cluster. Formally, for all $v\in P$, let $C(v)$ denote the cluster  that contains $v$. We say that $v\in P$ is IP stable with respect to $\mathcal{C}$ if either $C(v) = \{v\}$ or for every $C'\in \mathcal{C}$ with $C'\neq C(v)$,
\begin{align}\label{eq:av-IP stable}
    \frac{1}{|C(v)|-1} \sum_{u\in C(v)} d(v,u)     \leq \frac{1}{|C'|} \sum_{u\in C'} d(v,u).
\end{align}
The clustering $\mathcal{C}$ is 1-IP stable (or simply IP stable) if and only if every $v\in P$ is stable
with respect to $\mathcal{C}$.
\end{definition}

Ahmadi et al.~\cite{ahmadi2022individual} showed that an arbitrary dataset may not admit an IP stable clustering. This can be the case even when $n=4$. Furthermore, they proved that it is NP-hard to decide whether a given a set of points have an IP stable $k$-clustering, even for $k=2$. This naturally motivates the study of the relaxations of IP stability.

\begin{definition}[Approximate IP Stability]\label{def:approx-ip}
A $k$-clustering $\mathcal{C}=(C_1,\cdots, C_k)$ of $P$ is $\alpha$-approximate IP stable, or simply $\alpha$-IP stable, if for every point $v\in P$, the following holds: either $C(v) = \{v\}$ or for every $C'\in \mathcal{C}$ and $C'\neq C$,
\begin{align}\label{eq:av-approx-IP stable}
    \frac{1}{|C(v)|-1} \sum_{u\in C(v)} d(v,u) \leq \frac{\alpha }{|C'|}\sum_{u\in C'} d(v,u).
\end{align}
\end{definition}
The work of~\cite{ahmadi2022individual} proposed algorithms to outputting IP stable clusterings on the one-dimensional line for any value of $k$ and on tree metrics for $k=2$. The first result implies an $O(n)$-IP stable clustering for general metrics, by applying a standard $O(n)$-distortion embedding to one-dimensional Euclidean space. In addition, they give a bicriteria approximation that discards an $\varepsilon$-fraction of the input points and outputs an $O\left(\frac{\log^2 n}{\varepsilon}\right)$-IP stable clustering for the remaining points.

Given the prior results, it is natural to ask if the $O(n)$ factor for IP stable clustering given in~\cite{ahmadi2022individual} can be improved. 

\subsection{Our Results} 
\paragraph{New Approximations.}

Improving on the $O(n)$-IP stable algorithm in~\cite{ahmadi2022individual}, we present a deterministic algorithm which for general metrics obtains an $O(1)$-IP stable $k$-clustering, for any value of $k$. Note that given the existence of instances without $1$-IP stable clusterings, our approximation factor is optimal up to a constant factor.

\begin{theorem}(Informal; see~\Cref{thm:main})\label{res:main}
Given a set $P$ of $n$ points in a metric space $(M,d)$ and a number of desired clusters $k\leq n$, there exists an algorithm that computes an $O(1)$-IP stable $k$-clustering of $P$ in polynomial time. 
\end{theorem}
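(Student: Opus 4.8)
The first move is to replace \Cref{def:approx-ip} by a cleaner sufficient condition. For any point $v$, any cluster $C'$ not containing $v$, and any threshold $t$, we have $\frac{1}{|C'|}\sum_{u\in C'}d(v,u)\geq t\cdot\frac{|\{u\in C':d(v,u)\geq t\}|}{|C'|}$. Hence, writing $r_v:=\frac{1}{|C(v)|-1}\sum_{u\in C(v)}d(v,u)$ for the average distance of a non-singleton point to its own cluster, it suffices to produce a $k$-clustering in which, for every such $v$ and every other cluster $C'$, at least half of the points of $C'$ lie at distance at least $r_v/(2\alpha)$ from $v$. We will aim for a clustering with exactly this ``robust separation'' structure, with $\alpha=O(1)$.

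The plan is to build the clustering from a binary laminar family $\mathcal{T}$ on $P$ — a cluster tree whose leaves are the singletons and whose root is $P$ — carrying a scale $s(T)$ on each node with $s(\text{child})\leq s(\text{parent})$, such that each node $T$ is (i) \emph{tight at scale $s(T)$}, i.e.\ $r_v\leq s(T)$ for all $v\in T$ under the partition induced by $T$'s children, and (ii) \emph{robustly separated from its sibling at the children's scales}: for the two children $T_1,T_2$ of any node and every $v\in T_1$, all but a constant fraction of $T_2$ lies at distance $\geq c\,s(T_1)$ from $v$, and symmetrically. Given such a tree, deleting any $k-1$ edges yields $k$ clusters $C_1,\dots,C_k$ (the point sets of the resulting subtrees), and this clustering is $O(1)$-IP stable: for $v\in C_i$ and any other cluster $C_j$, let $T$ be the least common ancestor of $C_i,C_j$ and $T'\supseteq C_i$ the child of $T$ containing $v$; then $r_v\leq s(C_i)\leq s(T')$ while, by (ii) applied at $T$, most of $C_j$ is at distance $\geq c\,s(T')$ from $v$, which is exactly the condition from the previous paragraph. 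The point of insisting on robustness in (ii) — a statement about a constant fraction of $T_2$, not merely the single nearest point — is that $C_j$ is only a \emph{subset} of $T\setminus T'$, so a ``nearest-external-point'' separation would be destroyed by passing to this subset. (Indeed, for $m+1$ equally spaced points on a line with $k=2$ the point at the cut has an external point at distance $1$, far below its own-cluster average, so IP stability can hold only through the bulk of the far cluster being spread out.)

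The crux is the existence of such a cluster tree, and this is where I expect the real difficulty. Building $\mathcal{T}$ top-down, given $S$ we must split $S=S_1\sqcup S_2$ so that each side is tight at its strictly smaller scale and the two sides are robustly separated at those scales. A region-growing split is the natural candidate: fix a center, grow a ball, and stop at a radius where the shell just outside the ball is sparse compared to the mass already captured — such a radius exists within a logarithmic window of scales by a standard volume-growth argument (after first handling sets of huge aspect ratio, which split at an almost-empty dyadic scale) — then set $S_1$ to be the ball and $S_2$ the rest. The obstacle is that this only buys the weak ``nearest point'' separation: the few close points left outside the ball could themselves form a later cluster $C_j$ with $d_{\text{avg}}(v,C_j)$ small. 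To get robust separation one must therefore strengthen the recursive invariant — e.g.\ maintain that every cluster is not just tight but also \emph{spread at its scale near its boundary} — and choose the split rule (a min-ratio or largest-gap cut at a carefully chosen threshold, possibly followed by rebalancing moves) so that this stronger invariant passes to both children. Making this bookkeeping close, via a charging argument over the scales, is the main technical content.

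It remains to hit exactly $k$ clusters and to keep the construction polynomial-time. Since the leaves of $\mathcal{T}$ are singletons and its root is $P$, deleting $k-1$ edges produces exactly $k$ clusters for any $k\leq n$; one additionally wants these cuts choosable so that the clusters are balanced to within a constant factor, which holds if $\mathcal{T}$ is roughly balanced — so the split rule should also enforce approximate balance, adding one more constraint that the charging argument must absorb (and which trades against (i)--(ii)). All primitives involved — nets and shells, single-linkage forests, locating the $k-1$ cuts — run in polynomial time, giving the claimed efficiency.
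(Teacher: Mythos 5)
Your proposal does not contain a proof of the statement: the construction that everything hinges on is missing. The opening reduction is fine (it is the same Markov-type step the paper uses when it passes from ``average distance to $D_j$ less than $r/4$'' to ``$B(x,r/2)$ contains more than half of $D_j$''), but the existence of the laminar tree with invariants (i)--(ii) is never established. You yourself call this ``the main technical content'' and offer only candidate split rules (region growing, min-ratio or largest-gap cuts, rebalancing) with no argument that the strengthened ``robust separation'' invariant actually propagates to both children of a split, nor that it can coexist with the balance requirement you add at the end. As written, this is a research plan whose hardest step is left open, not a proof of \Cref{res:main}.

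There is also a concrete flaw in the step you do carry out: even granting the tree, the invariants as stated do not give $O(1)$-IP stability for the clusters obtained by deleting $k-1$ edges, because those clusters are \emph{truncations} of tree nodes. For the other-cluster bound, $C_j$ is only a subset of the sibling $T_2$, and ``all but a constant fraction of $T_2$ lies at distance $\geq c\,s(T')$ from $v$'' is vacuous when $|C_j|\ll|T_2|$: the exceptional close points of $T_2$ could constitute all of $C_j$. This is exactly the failure mode you flag for nearest-point separation, and constant-fraction robustness does not escape it --- you would need the exceptional set to be small relative to the minimum cluster size, a quantitatively different invariant. Symmetrically, for the own-cluster bound, $C_i$ is the node $T'$ minus cut-off descendant subtrees, and removing nearby points can \emph{increase} $v$'s within-cluster average, so $r_v\leq s(T')$ does not follow from tightness of the node. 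The paper avoids all of this multi-scale bookkeeping: it fixes a single scale by running greedy $k$-center and setting $r=r_0/15$, ball-carves at that one scale (\Cref{alg:main-subroutine}) using an annulus-capping trick and an ``assign to the earliest center within $7r$'' rule so that every carved cluster has diameter $O(r)$ while every point has average distance $\Omega(r)$ to every other carved cluster (\Cref{lem:main}), and then merges carved clusters onto the $k$ greedy centers; both guarantees survive this coarsening (the average-distance lower bound is a weighted average over merged parts, and diameters remain $O(r_0)$), which yields exactly $k$ clusters, $O(1)$-IP stability, and the $k$-center guarantee in \Cref{thm:main}.
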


Our algorithm outputs a clustering with an even   stronger guarantee that we call {uniform (approximate) IP stability}.
Specifically, for some global parameter $r$ and for every point $v \in P$, the average distance from $v$ to points in its own cluster is upper bounded by $O(r)$ and the average distance from $v$ to points in any other cluster is lower bounded by $\Omega(r)$. Note that the general condition of $O(1)$-IP stability would allow for a different value of $r$ for each $v$.

We again emphasize that~\cref{res:main} implies that an $O(1)$-IP stable clustering always exists, where prior to this work, only the $O(n)$ bound from~\cite{ahmadi2022individual} was known for general metrics.

\paragraph{Additional $k$-Center Clustering Guarantee.}
The clustering outputted by our algorithm satisfies additional desirable properties beyond $O(1)$-IP stability. In the $k$-center problem, we are given $n$ points in a metric space, and our goal is to pick $k$ centers as to minimize the maximal distance of any point to the nearest center. 
The clustering outputted by our algorithm from~\cref{res:main} has the added benefit of being a constant factor approximation to the $k$-center problem in the sense that if the optimal $k$-center solution has value $r_0$, then the diameter of each cluster outputted by the algorithm is $O(r_0)$.  In fact, we argue that IP stability is more meaningful when we also seek a solution that optimizes some clustering objective.  If we only ask for IP stability, 
there are instances where it is easy to obtain $O(1)$-IP stable clusterings, but where such clusterings do not provide insightful information in a typical clustering application. Indeed, as we will show in~\cref{app:random-color}, randomly $k$-coloring the nodes of an unweighted, undirected graph (where the distance between two nodes is the number of edges on the shortest path between them), gives an $O(1)$-IP stable clustering when $k \le O\left(\frac{\sqrt{n}}{\log n}\right)$. Our result on trees demonstrates the idiosyncrasies of individual objectives thus our work  raises further interesting questions about studying standard global clustering objectives under the restriction that the solutions are also (approximately) IP stable.

\paragraph{Max and Min-IP Stability.}
Lastly, we introduce a notion of $f$-IP stability, generalizing IP stability.
\begin{definition}[$f$-IP Stability]\label{def:f_stability}
Let $(M,d)$ be a metric space, $P$ a set of $n$ points of $M$, and $k$ the desired number of partitions. Let $f: P \times 2^P \rightarrow \mathbb{R}^{\ge 0}$ be a function which takes in a point $v \in P$, a subset $C$ of $P$, and outputs a non-negative real number. we say that a $k$-clustering $\mathcal{C}=(C_1,\cdots, C_k)$ of $P$ is  $f$-IP stable if for every point $v\in P$, the following holds: either $C(v) = \{v\}$ or for every $C'\in \mathcal{C}$ and $C'\neq C$,
\begin{align}\label{eq:f-IP stable}
    f\left(v, C(v)\setminus \{v\}\right) \leq f\left(v, C'\right).
\end{align}
\end{definition}
 Note that the standard setting of IP stability given in Definition \ref{def:IP} corresponds to the case where $f(v,C) = (1/|C|) \times \sum_{v' \in C} d(v, v')$.
The formulation of $f$-IP stability, therefore, extends   IP stability   beyond average distances and allows for alternative objectives that may be more desirable in certain settings. 
For instance, in hierarchical clustering, average, minimum, and maximum distance measures are well-studied.

In particular,  
we focus on max-distance and min-distance in the definition of $f$-IP stable clustering in addition to average distance (which is just Definition \ref{def:IP}), where $f(v, C) = \max_{v' \in C} d(v, v')$ and $f(v,C) = \min_{v' \in C} d(v, v')$.  We show that in both the max and min distance formulations, we can solve the corresponding $f$-IP stable clustering (nearly) optimally in polynomial time.
We provide the following result:

\begin{theorem}[Informal; see~\Cref{thm:min-IP-stable} and~\Cref{thm:max-IP-stable}]
In any metric space, Min-IP stable clustering can be solved optimally and Max-IP stable clustering can be solved approximately within a factor of $3$, in polynomial time. 
\end{theorem}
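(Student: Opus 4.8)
The plan is to treat the two objectives separately, in each case first reducing $f$-IP stability to a clean combinatorial condition and then building an algorithm around it.

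\textbf{Min-IP stability.} For $f(v,C)=\min_{u\in C} d(v,u)$, I would first observe that a clustering is (exactly) $f$-IP stable if and only if every non-singleton point $v$ contains one of its nearest neighbors — a point realizing $\rho(v):=\min_{u\neq v} d(v,u)$ — inside its own cluster. This is because the stability condition rewrites as $\min_{u\in C(v)\setminus\{v\}} d(v,u)\le \min_{u\notin C(v)} d(v,u)$, and since the two sides range over sets that partition $P\setminus\{v\}$, the inequality is equivalent to the left-hand side equaling $\rho(v)$. Given this, I would produce a $k$-clustering with this property by a greedy merging procedure: start with all $n$ singletons and repeatedly merge two current clusters — decreasing the count by one — until exactly $k$ remain. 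The key lemma is that whenever more than $k$ (hence at least two) clusters remain, some merge preserves the invariant ``every non-singleton point has an in-cluster nearest neighbor'': (i) merging two non-singleton clusters always preserves it; (ii) merging a singleton $\{x\}$ into a non-singleton cluster that already contains a nearest neighbor of $x$ preserves it; and (iii) if neither applies, then every singleton's nearest neighbors are themselves singletons, so the globally closest pair $\{x,y\}$ among the singleton points forms a mutual nearest-neighbor pair and merging $\{x\},\{y\}$ preserves it. Since one of (i)--(iii) always applies, the procedure terminates with an exactly Min-IP stable $k$-clustering in polynomial time, and in particular such a clustering exists for every $1\le k\le n$.

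\textbf{Max-IP stability.} For $f(v,C)=\max_{u\in C} d(v,u)$ the plan is to show that the farthest-first (Gonzalez) $k$-center traversal, followed by assigning each point to its nearest chosen center, is already $3$-IP stable. Let $c_1,\dots,c_k$ be the greedily chosen centers and $r:=\max_v\min_i d(v,c_i)$ the resulting covering radius; two standard facts are that $d(c_i,c_j)\ge r$ for all $i\neq j$ (each center, when chosen, was farthest from the previously chosen ones, and these max-distances are non-increasing) and that $d(u,c_i)\le r$ whenever $u$ is assigned to $c_i$. Fix a non-singleton $v$, assigned to $c_i$, and set $t:=d(v,c_i)\le r$. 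Going through $c_i$ gives $\max_{u\in C(v)\setminus\{v\}} d(v,u)\le t+r$, while for any other cluster $C_j$ we have $\max_{u\in C_j} d(v,u)\ge d(v,c_j)\ge \max\!\bigl(d(v,c_i),\,d(c_i,c_j)-d(v,c_i)\bigr)\ge \max(t,\,r-t)$, using that $c_i$ is a nearest center to $v$ together with the triangle inequality. Hence the stability ratio of $v$ is at most $(t+r)/\max(t,r-t)$, and a one-line case check ($t\le r/2$ versus $t\ge r/2$) shows this is always at most $3$. The cases $k=1$ and $k=n$ are vacuous.

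\textbf{Main obstacles.} For Min-IP the delicate point is proving that a stability-preserving merge always exists — in particular handling the configuration where no good merge into a non-singleton cluster is available; case (iii), via the mutual-nearest-neighbor property of the closest remaining pair, is what resolves it. For Max-IP the crux is recognizing that an off-the-shelf $k$-center routine already suffices and extracting the tight constant $3$ from the $\max(t,r-t)$ bound, the worst case $t=r/2$ showing $3$ is the best this analysis yields; if the paper also claims near-optimality, I would complement the algorithm with a small metric instance admitting no $(3-\varepsilon)$-Max-IP clustering, but the algorithmic direction above is the main content.
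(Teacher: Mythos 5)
Your proposal is correct. The Max-IP half is essentially identical to the paper's proof: the paper also runs Gonzalez's greedy $k$-center algorithm, assigns points to nearest centers, and uses the two facts $d(c_i,c_j)\ge r$ and $d(u,c_i)\le r$ with the same two-case analysis ($d(v,c_i)\le r/2$ versus $>r/2$) to get the factor $3$; the paper claims no matching lower bound, so your optional tightness instance is not needed. The Min-IP half, however, takes a genuinely different route. The paper runs Kruskal's MST algorithm stopped once $k$ connected components remain, and proves stability directly from the edge ordering: for any $x$ in a cluster $C$, the tree edge $(x,y_0)$ incident to $x$ was chosen no later than the candidate edge from $x$ to its nearest point outside $C$, so $\min_{y\in C\setminus\{x\}}d(x,y)\le \min_{y\in P\setminus C}d(x,y)$. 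You instead first prove a structural characterization (a clustering is Min-IP stable iff every point in a non-singleton cluster has an in-cluster nearest neighbor) and then maintain this invariant through $n-k$ agglomerative merges, resolving the only nontrivial configuration via the mutual-nearest-neighbor property of the closest singleton pair; your case analysis is sound, with the one small point worth spelling out that when neither (i) nor (ii) applies and at least two clusters remain there must in fact be at least two singletons (if there were exactly one singleton, its nearest neighbor would lie in the unique non-singleton cluster, triggering (ii)), which your reasoning implicitly covers. The paper's argument is shorter and yields a single canonical (single-linkage-style) solution, while yours buys an explicit characterization of Min-IP stability and flexibility in which merges to perform, hence a whole family of exact solutions; both run in polynomial time and establish existence for every $k$.
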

We show that the standard greedy algorithm of $k$-center, a.k.a, the Gonzalez's algorithm \cite{gonzalez1985clustering}, yields a $3$-approximate Max-IP stable clustering. Moreover, we present a conceptually clean algorithm which is motivated by considering the minimum spanning tree (MST) to output a Min-IP stable clustering. This implies that unlike the average distance formulation of IP stable clustering, a Min-IP stable clustering always exists. Both algorithms work in general metrics.
\begin{table*}[!ht]
\centering
\begin{tabular}{|l|l|l|l|}
\hline
Metric          & Approximation Factor & Reference & Remark           \\ \hline\hline
1D Line metric     & $1$                  &      \cite{ahmadi2022individual}      &                  \\ \hline
Weighted tree   & $1$                  &        \cite{ahmadi2022individual}    & Only for $k=2$   \\ \hline
General metric  & $O(n)$               &     \cite{ahmadi2022individual}       &                  \\ \hline
General metric  & $O(1)$          & \textbf{This work}  & \\ \hline
\end{tabular}
\caption{Our results on IP stable $k$-clustering of $n$ points. All algorithms run in polynomial time.}
\end{table*}
\vspace{-4mm}
\paragraph{Empirical Evaluations.}
We experimentally evaluate our $O(1)$-IP stable clustering algorithm against $k$-means++, which is the empirically best-known algorithm in~\cite{ahmadi2022individual}. We also compare $k$-means++ with our optimal algorithm for Min-IP stability.
We run experiments on the Adult data set\footnote{\href{https://archive.ics.uci.edu/ml/datasets/adult}{https://archive.ics.uci.edu/ml/datasets/adult}; see~\cite{kohavi1996scaling}.} used by \cite{ahmadi2022individual}. For IP stability, we also use four more datasets from UCI ML repositoriy~\cite{Dua:2019} and a synthetic data set designed to be a hard instance for $k$-means++.
On the Adult data set, our algorithm 
performs slightly worse than $k$-means++ for IP stability. This is consistent with the empirical results of~\cite{ahmadi2022individual}. On the hard instance\footnote{The construction of this hard instance is available in the appendix of~\cite{ahmadi2022individual}.}, our algorithm performs better than $k$-means++, demonstrating that the algorithm proposed in this paper is more robust than $k$-means++. Furthermore for Min-IP stability, we empirically demonstrate that $k$-means++ can have an approximation factors which are up to a factor of \textbf{5x} worse than our algorithm. We refer to~\cref{sec:experiment} and \cref{sec:additional_experiments} for more details.

\subsection{Technical Overview}
The main contribution is our $O(1)$-approximation algorithm for IP stable clustering for general metrics. We discuss the proof technique used to obtain this result. Our algorithm comprises two steps. We first show that for any radius $r$, we can find a clustering $\mathcal{C}=(C_1,\dots,C_t)$ such that (a) each cluster has diameter $O(r)$, and (b) the average distance from a point in a cluster to the points of any other cluster is $\Omega(r)$.

Conditions (a) and (b) are achieved through a ball carving technique, where we iteratively pick centers $q_i$ of distance $>6r$ to previous centers such that the radius $r$ ball $B(q_i,r)$ centered at $q_i$ contains a maximal number of points, say $s_i$. For each of these balls, we initialize a cluster $D_i$ containing the $s_i$ points of $B(q_i,r)$. We next consider the annulus $B(q_i,3r)\setminus B(q_i,2r)$. If this annulus contains less than $s_i$ points, we include all points from $B(q_i,3r)$ in $D_i$. Otherwise, we include \emph{any} $s_i$ points in $D_i$ from the annulus. We assign each unassigned point to the \emph{first} center picked by our algorithm and is within distance $O(r)$ to the point. This is a subtle but crucial component of the algorithm as the more natural ``assign to the closest center'' approach fails to obtain $O(1)$-IP stability.

One issue remains. With this approach, we have no guarantee on the number of clusters. We solve this by merging some of these clusters while still maintaining that the final clusters have radius $O(r)$. This may not be possible for any choice of $r$. 
Thus the second step is to find the right choice of $r$. We first run the greedy algorithm of $k$-center and let $r_0$ be the minimal distance between centers we can run the ball carving algorithm $r=cr_0$ for a sufficiently small constant $c<1$. Then if we assign each cluster of $\mathcal{C}$ to its nearest center among those returned by the greedy algorithm $k$-center, we do indeed maintain the property that all clusters have diameter $O(r)$, and since $c$ is a small enough constant, all the clusters will be non-empty. The final number of clusters will therefore be $k$. As an added benefit of using the greedy algorithm for $k$-center as a subroutine, we obtain that the diameter of each cluster is also $O(r_0)$, namely the output clustering is a constant factor approximation to $k$-center.

\color{black}

\subsection{Related Work}
\paragraph{Fair Clustering.} One of the main motivations of IP stable clustering is its interpretation as a notion of individual fairness for clustering~\cite{ahmadi2022individual}. Individual fairness was first introduced by 
\cite{dwork2012fairness} for the classification task, where, at high-level, the authors aim for a classifier that gives ``similar predictions'' for ``similar'' data points.
Recently, other formulations of individual fairness have been studied for clustering~\cite{jung2020,anderson2020distributional,brubach2020pairwise, chakrabarti2022new,kar2023feature}, too. \cite{jung2020} proposed a notion of fairness for {centroid-based} clustering: given a set of $n$ points $P$ and the number of clusters $k$, for each point, a center must be picked among its $(n/k)$-th closest neighbors. The optimization variant of it was later studied by~\cite{mahabadi2020,negahbani2021better,vakilian2022improved}.
\cite{brubach2020pairwise} studied a pairwise notion of fairness in which data points represent people who gain some benefit from being clustered together. 
In a subsequent work,~\cite{brubach2021fairness} introduced a stochastic variant of this notion. \cite{anderson2020distributional} studied the setting in which the output is a distribution over centers and ``similar'' points are required to have ``similar'' centers distributions.

\paragraph{Stability in Clustering.} Designing efficient clustering algorithms under notions of stability is a well-studied problem\footnote{For a comprehensive survey on this topic, refer to~\cite{awasthi2014center}.}. Among the various notion of stability, {\em average stability} is the most relevant to our model~\cite{balcan2008discriminative}. In particular, they showed that if there is a ground-truth clustering satisfying the requirement of~\Cref{eq:av-IP stable} with an additive gap of $\gamma>0$, then it is possible to recover the solution in the list model where the list size is exponential in $1/\gamma$. Similar types of guarantees are shown in the work by~\cite{daniely2012clustering}. 
While this line of research mainly focuses on presenting faster algorithms utilizing the strong stability conditions, the focus of IP stable clustering is whether we can recover such stability properties in general instances, either exactly or approximately.  

\paragraph{Hedonic Games.}  Another game-theoretic study of clustering is hedonic games \cite{dreze1980hedonic, bogomolnaia2002stability, elkind2020price}. 
In a hedonic game, players choose to form coalitions (i.e., clusters) based on their utility. Our work differs from theirs, since we do not model the data points as selfish players. In a related work, \cite{stoica2018strategic} proposes another utility  measure for hedonic clustering games on graphs. In particular, they define a closeness utility, where the utility of node $i$ in cluster $C$ is the ratio between the  number of nodes in $C$   adjacent to $i$ 
and the sum of   distances from $i$ to other nodes in $C$. This measure is incomparable to IP stability. In addition, their work focuses only on clustering in graphs while we consider general metrics.

\section{Preliminaries and Notations}
We let $(M,d)$ denote a metric space, where $d$ is the underlying distance function. We let $P$ denote a fixed set of points of $M$. Here $P$ may contain multiple copies of the same point. For a given point $x\in P$ and radius $r\geq 0$, we denote by $B(x,r)=\{y\in P \mid d(x,y)\leq r\}$, the ball of radius $r$ centered at $x$. For two subsets $X,Y\subseteq P$, we denote by $d(X,Y)=\inf_{x\in X,y\in Y}d(x,y)$. Throughout the paper, $X$ and $Y$ will always be finite and then the infimum can be replaced by a minimum. For $x\in P$ and $Y\subseteq P$, we simply write $d(x,Y)$ for $d(\{x\},Y)$. Finally, for $X\subseteq P$, we denote by $\diam(X)=\sup_{x,y\in X} d(x,y)$, the diameter of the set $X$. Again, $X$ will always be finite, so the supremum can be replaced by a maximum. 
\section{Constant-Factor  IP Stable Clustering}
In this section, we prove our main result: For a set $P=\{x_1,\dots,x_n\}$ of $n$ points  with a metric $d$ and every $k\leq n$, there exists a $k$-clustering $\mathcal{C}=(C_1,\dots,C_k)$ of $P$   which is $O(1)$-approximate IP stable. 
Moreover, such a clustering can be found in time $\widetilde{O}(n^2 T)$, where $T$ is an upper bound on the time it takes to compute the distance between two points of $P$. 

\paragraph{Algorithm}
Our algorithm uses a subroutine,~\Cref{alg:main-subroutine}, which takes as input $P$ and a radius $r\in \mathbb{R}$ and returns a $t$-clustering $\mathcal{D}=(D_1,\dots,D_t)$ of $P$ with the properties that (1) for any $1\leq i\leq t$, the maximum distance between any two points of $ D_i$ is $O(r)$, and (2) for any $x\in P$ and any $i$ such that $x\notin D_i$, the average distance from $x$ to points of $D_i$ is $\Omega(r)$. These two properties ensure that $\mathcal{D}$ is $O(1)$-approximate IP stable. However, we have no control on the number of clusters $t$ that the algorithm produces. To remedy this, we first run a greedy $k$-center algorithm on $P$ to obtain a set of centers $\{c_1,\dots,c_k \}$ and let $r_0$ denote the maximum distance from a point of $P$ to the nearest center. We then run~\Cref{alg:main-subroutine} with input radius $r=cr_0$ for some small constant $c$. This gives a clustering $\mathcal{D}=(D_1,\dots,D_t)$ where $t\geq k$. Moreover, we show that if we assign each cluster of $\mathcal{D}$ to the nearest center in $\{c_1,\dots,c_k \}$ (in terms of the minimum distance from a point of the cluster to the center), we obtain a $k$-clustering $\mathcal{C}=(C_1,\dots,C_k)$ which is $O(1)$-approximate IP stable. The combined algorithm is~\Cref{alg:main}.

\begin{algorithm}[!ht]
\caption{\label{alg:main-subroutine}  \textsc{Ball-Carving}} 
\begin{algorithmic}[1]
\STATE \textbf{Input}: A set $P=\{x_1,\dots,x_n\}$ of $n$ points with a metric  $d$ and a radius $r>0$. 
\STATE \textbf{Output}: Clustering $\mathcal{D}=(D_1,\dots,D_t)$ of $P$.
\vspace*{0.5em}

\STATE $Q\la \emptyset$, $i\la 1$
\WHILE{there exists $x\in P$ with $d(x,Q)> 6r$}
\STATE $q_i\la \argmax_{x\in P: d(x, Q)>6r} |B(x,r)|$ \label{ln:greedy-center-begin}
\STATE $Q\la Q\cup \{q_i\}$, $s_i \la |B(q_i,r)|$, $A_i \la B(q_i,3r)\setminus B(q_i,2r)$\label{ln:greedy-center-end}
\STATE {\bf if} {$|A_i|\geq s_i$}\label{ln:double-shell-begin}
\STATE \quad $S_i\la$ any set of $s_i$ points from $A_i$ \label{ln:dense-ext-1}
\STATE \quad $D_i\la B(q_i,r)\cup S_i$\label{ln:dense-ext-2}
\STATE {\bf else} $D_i\la B(q_i,3r_i)$ \label{ln:sparse-third-ring} \label{ln:double-shell-end}
\STATE $i\la i+1$
\ENDWHILE
\FOR{$x\in P$ assigned to no $D_i$}\label{ln:assgn-begin}
\STATE $j\la\min\{i \mid d(x,q_i)\leq 7r\}$
\STATE $D_j\la D_j\cup \{x\}$
\ENDFOR\label{ln:assgn-end}
\STATE $t\la |Q|$
\RETURN $\mathcal D = (D_1,\dots,D_t)$ 
\end{algorithmic}
\end{algorithm}
We now describe the details of~\Cref{alg:main-subroutine}. The algorithm takes as input $n$ points $x_1,\dots,x_n$ of a metric space $(M,d)$ and a radius $r$. It first initializes a set $Q=\emptyset$ and then iteratively adds points $x$ from $P$ to $Q$ that are of distance greater than $6r$ from points already in $Q$ such that $|B(x,r)|$,   the number of points of $P$ within radius $r$ of $x$, is maximized. This is line~\ref{ln:greedy-center-begin}--\ref{ln:greedy-center-end} of the algorithm. 
Whenever a point $q_i$ is added to $Q$, we define the annulus $A_i:=B(q_i,3r)\setminus B(q_i,2r)$. We further let $s_i=|B(q_i,r)|$. At this point the algorithm splits into two cases. 
\begin{itemize}
\item
If $|A_i|\geq s_i$, we initialize a cluster $D_i$ which consists of the $s_i$ points in $B(x,r)$ and any arbitrarily chosen $s_i$ points in $A_i$. This is line \ref{ln:dense-ext-1}--\ref{ln:dense-ext-2} of the algorithm. 
    \item 
If, on the other hand, $|A_i|< s$, we define $D_i:=B(q_i,3r)$, namely $D_i$ contains all points of $P$ within distance $3r$ from $q_i$. This is line~\ref{ln:sparse-third-ring} of the algorithm.
\end{itemize}
After iteratively picking the points $q_i$ and initializing the clusters $D_i$, we assign the remaining points as follows. 
For any point $x\in P\setminus \bigcup_i D_i$, we find the minimum $i$ such that $d(x,q_i)\leq 7r$ and assign $x$ to $D_i$. 
This is line~\ref{ln:assgn-begin}--\ref{ln:assgn-end} of the algorithm. 
We finally return the clustering $\mathcal{D} =(D_1,\dots, D_t)$.

\begin{algorithm}[t]
\caption{\label{alg:main} \textsc{IP-Clustering}}
\begin{algorithmic}[1]
\STATE \textbf{Input}: Set $P=\{x_1,\dots,x_n\}$ of $n$ points with a metric  $d$ and integer $k$ with $2\leq k\leq n$. 
\STATE \textbf{Output}:  $k$-clustering $\mathcal{C}=(C_1,\dots,C_k)$ of $P$.
\vspace*{0.5em}

\STATE $S\la \emptyset$
\FOR{$i=1,\dots,k$}\label{ln:k-center-begin}
\STATE $c_i\la \argmax_{x\in P}\{d(x,S)\}$
\STATE $S\la S\cup\{c_i\},\; C_i\la\{c_i\}$
\ENDFOR\label{ln:k-center-end}
\STATE $r_0\la \min\{d(c_i,c_j)\mid 1\leq i<j\leq k\}$ \label{ln:ball-carving-pre}
\STATE $\mathcal{D}\la$ \textsc{Ball-Carving}$(P,r_0/15)$\label{ln:ball-carving-call} 
\FOR {$D\in \mathcal{D}$} \label{ln:final-loop}
\STATE $j\la \argmin_i\{d(c_i,D)\}$ \label{ln:final-assgn-begin}
\STATE $C_j\la C_j\cup D$
\ENDFOR \label{ln:final-assgn-end}
\RETURN $\mathcal{C} =(C_1,\dots,C_k)$ 
\end{algorithmic}
\end{algorithm}

We next describe the details of~\Cref{alg:main}. The algorithm iteratively pick $k$ centers $c_1,\dots, c_k$ from $P$ for each center maximizing the minimum distance to previously chosen centers. For each center $c_i$, it initializes a cluster, starting with $C_i=\{c_i\}$. This is line~\ref{ln:k-center-begin}--\ref{ln:k-center-end} of the algorithm. Letting $r_0$ be the minimum distance between pairs of distinct centers, the algorithm runs~\Cref{alg:main-subroutine} on $P$ with input radius $r=r_0/15$ (line~\ref{ln:ball-carving-pre}--\ref{ln:ball-carving-call}). This produces a clustering $\mathcal{D}$. In the final step, we iterate over the clusters $D$ of $\mathcal{D}$, assigning $D$ to the $C_i$ for which $d(c_i,D)$ is minimized (line~\ref{ln:final-assgn-begin}--\ref{ln:final-assgn-end}). We finally return the clustering $(C_1,\dots,C_k)$.

\paragraph{Analysis} We now analyze our algorithm and provide its main guarantees.
\begin{theorem}\label{thm:main}
\cref{alg:main} returns an $O(1)$-approximate IP stable $k$ clustering in time $O(n^2T + n^2 \log n)$. Furthermore, the solution is also a constant factor approximation to the $k$-center problem.
\end{theorem}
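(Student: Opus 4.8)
The plan is to isolate a single lemma about the subroutine \textsc{Ball-Carving} and derive \Cref{thm:main} from it. The lemma: on input $(P,r)$, \Cref{alg:main-subroutine} returns a partition $\mathcal D=(D_1,\dots,D_t)$ of $P$ with \textbf{(a)} $\diam(D_i)\le 14r$ for all $i$, and \textbf{(b)} for every $i$ and every $x\in P\setminus D_i$, $\frac1{|D_i|}\sum_{u\in D_i}d(x,u)=\Omega(r)$. The routine parts are termination of the while loop (each round enlarges $Q\subseteq P$, and on termination every point lies within $6r$ of $Q$, so the reassignment in lines~\ref{ln:assgn-begin}--\ref{ln:assgn-end} is well defined) and part (a): the ``core'' of $D_i$ fixed before reassignment lies in $B(q_i,3r)$ and every reassigned point is within $7r$ of $q_i$, so $D_i\subseteq B(q_i,7r)$.

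\emph{The main obstacle is part (b).} Fix $D_i$ and $x\notin D_i$; write $D_i=\mathrm{core}\cup R_i$, where $R_i$ is the set of reassigned points and $\mathrm{core}$ is either $B(q_i,r)\cup S_i$ (first case: $|S_i|=s_i$, $S_i\subseteq B(q_i,3r)\setminus B(q_i,2r)$) or $B(q_i,3r)$ (second case: $|A_i|<s_i$), and in all cases $B(q_i,r)\subseteq D_i$. The key structural fact, and the reason the ``assign to the first center within $7r$'' rule is used rather than ``assign to the closest center,'' is that every $u\in R_i$ has $d(u,q_{i'})>7r$ for all $i'<i$, hence $u$ was an eligible center at step $i$, so $|B(u,r)|\le s_i$ by the maximality in line~\ref{ln:greedy-center-begin}. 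This gives, for any point $x$, the packing bound $|R_i\cap B(x,r/2)|\le|B(x,r/2)|\le s_i$ (if some $u\in R_i$ lies within $r/2$ of $x$, then $B(x,r/2)\subseteq B(u,r)$). Next one extracts from the core a set $G\subseteq D_i$, disjoint from $R_i$, with $|G|\ge s_i$, every point of $G$ at distance $\ge r/2$ from $x$, and $|D_i|\le 2|G|+|R_i|$: if $d(x,q_i)>\tfrac{3r}{2}$ take $G=B(q_i,r)$ in the first case and $G=B(q_i,2r)$ in the second case (there $x\notin D_i$ forces $d(x,q_i)>3r$, and $|A_i|<s_i\le|B(q_i,2r)|$ gives the size bound); if $d(x,q_i)\le\tfrac{3r}{2}$ then $x\notin D_i$ forces the first case and $G=S_i$ works since $S_i$ sits in the $(2r,3r]$-annulus about $q_i$. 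Combining this with the packing bound, $\sum_{u\in D_i}d(x,u)\ge\tfrac r2\bigl(|G|+|R_i|-s_i\bigr)$ and also $\ge\tfrac r2|G|$; a short case split on the sizes of $|R_i|,|G|,s_i$ (using $s_i\le|G|$ and $|D_i|\le 2|G|+|R_i|$) then yields $\frac1{|D_i|}\sum_{u\in D_i}d(x,u)=\Omega(r)$. The delicate points are getting the eligibility claim for $R_i$ exactly right and handling the second case, where $|D_i|$ can be much larger than $s_i$.

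\emph{Deriving \Cref{thm:main}.} Let $r_{\mathrm{cov}}=\max_{x\in P}d(x,S)$ be the covering radius of the greedy centers; the standard Gonzalez guarantee gives $r_{\mathrm{cov}}\le 2\,\mathrm{OPT}$, where $\mathrm{OPT}$ is the optimal $k$-center value, and $r_0=\min_{i\ne j}d(c_i,c_j)$. We may assume $r_0>0$ (otherwise $P$ has fewer than $k$ distinct points, handled directly) and run \textsc{Ball-Carving} with $r=r_0/15$. Since $\diam(D)\le14r<15r=r_0\le d(c_i,c_j)$, each $D\in\mathcal D$ contains at most one $c_i$, and the $D$ containing $c_j$ is the strict minimizer of $i\mapsto d(c_i,D)$, so it is assigned to $C_j$; hence $\mathcal C$ is a partition into exactly $k$ nonempty parts. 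For any $D$ assigned to $C_j$, $d(c_j,D)=\min_i d(c_i,D)=\min_{z\in D}d(z,S)\le r_{\mathrm{cov}}\le r_0$, so a triangle inequality gives $\diam(C_j)=O(r)$; thus for $v\in C_j$ with $|C_j|\ge 2$ the average distance from $v$ to $C_j\setminus\{v\}$ is $O(r)$, while for $C'\ne C_j$, applying (b) to every $D\subseteq C'$ (note $v\notin D$) gives average distance from $v$ to $C'$ at least $\Omega(r)$ (a weighted average of the per-$D$ bounds); the case $|C_j|=1$ is vacuous, so $\mathcal C$ is $O(1)$-IP stable. For the $k$-center bound, sharpen (a): for $u,v\in D$ with nearest centers $c_a,c_b$, either $a=b$ and $d(u,v)\le 2r_{\mathrm{cov}}$, or $a\ne b$ and then $15r\le d(c_a,c_b)\le 2r_{\mathrm{cov}}+\diam(D)\le 2r_{\mathrm{cov}}+14r$ forces $r_{\mathrm{cov}}\ge r/2$, whence $d(u,v)\le 14r\le 28r_{\mathrm{cov}}$; so $\diam(D)\le 28r_{\mathrm{cov}}$ and $\diam(C_j)=O(r_{\mathrm{cov}})=O(\mathrm{OPT})$, i.e.\ using $c_1,\dots,c_k$ as centers gives an $O(1)$-approximation for $k$-center.

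\emph{Running time.} The greedy loop performs $O(nk)\le O(n^2)$ distance evaluations, i.e.\ $O(n^2T)$ time. \textsc{Ball-Carving} computes all pairwise distances once ($O(n^2T)$), and then each of the $\le n$ rounds of the while loop spends $O(n)$ to take the $\arg\max$ and refresh the eligibility flags, while the reassignment and the final merge are $O(n^2)$ with cached distances; sorting distances to evaluate the balls contributes $O(n^2\log n)$. The total is $O(n^2T+n^2\log n)$, as claimed.
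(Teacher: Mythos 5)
Your proposal is correct, and it reuses the paper's architecture: the same two-step decomposition (a \textsc{Ball-Carving} lemma guaranteeing diameter $O(r)$ and inter-cluster average distance $\Omega(r)$, then merging the carved clusters onto the Gonzalez centers with $r=r_0/15$), and your derivation of \Cref{thm:main} from that lemma (nonemptiness of all $k$ clusters, the coarsening argument for stability, the running-time accounting) matches the paper's proof. Where you genuinely diverge is the proof of the average-distance bound in \Cref{lem:main}. The paper argues by contradiction: if the average distance from $x$ to $D_j$ were below $r/4$, the ball $B(x,r/2)$ would contain more than half of $D_j$, hence more than $s_j$ points; maximality in line~\ref{ln:greedy-center-begin} then forces an earlier center within $6r$ of $x$, which excludes reassigned points from $B(x,r/2)$, and a packing argument inside the core finishes. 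You instead count directly: every reassigned point of $R_i$ is more than $7r$ from all earlier centers, hence was eligible when $q_i$ was chosen, so its $r$-ball has at most $s_i$ points, capping $|R_i\cap B(x,r/2)|$ by $s_i$; you then exhibit an explicit core subset $G$ with $|G|\ge s_i$, all of $G$ at distance $\ge r/2$ from $x$, and $|D_i|\le 2|G|+|R_i|$, and the case split you sketch does close (if $|R_i|\le 2|G|$ the ratio of far points is at least $1/4$; otherwise $|G|+|R_i|-s_i>|R_i|/2$ while $|D_i|<2|R_i|$), giving the bound with constant $r/8$ in place of the paper's $r/4$. Both arguments rest on the same two facts---the greedy maximality of $|B(q_i,r)|$ and the ``first center within $7r$'' assignment rule---so the difference is organizational rather than conceptual, though your version avoids the majority-ball step. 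A further point in your favor: you make the $k$-center claim quantitative, proving $\diam(D)\le 28\,r_{\mathrm{cov}}$ and hence $\diam(C_j)=O(r_{\mathrm{cov}})=O(\mathrm{OPT})$, whereas the paper only establishes $\diam(C_j)<4r_0$ and asserts that the $k$-center guarantee follows; since $r_0$ can greatly exceed $r_{\mathrm{cov}}$, your sharpening is exactly the missing detail needed to compare against $\mathrm{OPT}$.
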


In order to prove this theorem, we require the following lemma on~\Cref{alg:main-subroutine}.

\begin{lemma}\label{lem:main}
Let $(D_1,\dots,D_t)$ be the clustering output by~\Cref{alg:main-subroutine}. For each $i\in [t]$, the diameter of $D_i$ is at most $14r$. Further, for $x\in D_i$ and $j\neq i$, the average distance from $x$ to points of $D_j$ is at least $\frac{r}{4}$.  
\end{lemma}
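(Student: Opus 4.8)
I would first show that every point ever placed in $D_i$ lies within distance $7r$ of the center $q_i$. The points added to $D_i$ while $q_i$ is processed form either $B(q_i,r)\cup S_i$ with $S_i\subseteq A_i\subseteq B(q_i,3r)$, or $B(q_i,3r)$; in both cases they lie in $B(q_i,3r)$. Any point attached to $D_i$ in the final assignment loop satisfies $d(x,q_i)\le 7r$ by the defining choice $j=\min\{\,i:d(x,q_i)\le 7r\,\}$ (this loop is well defined because the \textsc{while} loop terminates only once every point of $P$ is within $6r$ of some chosen center, so such an $i$ always exists). The triangle inequality then gives $\diam(D_i)\le 14r$.

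\textbf{Part 2 (average distance): structural facts.} Fix $x\in D_i$ and $j\ne i$. My plan is to prove the stronger statement that at most half of $D_j$ lies within distance $r/2$ of $x$, i.e.\ $|D_j\cap B(x,r/2)|\le|D_j|/2$; then $\frac{1}{|D_j|}\sum_{u\in D_j}d(x,u)\ge\frac12\cdot\frac{r}{2}=\frac{r}{4}$. I would isolate three facts. \emph{(i) Disjointness:} since the centers are pairwise at distance $>6r$, the balls $B(q_a,3r)$ are pairwise disjoint, so the cores of the $D_a$'s are disjoint and each assignment-phase point goes to exactly one cluster; in particular $x\notin D_j$, and if $D_j$ comes from the sparse branch then $D_j=B(q_j,3r)\cup(\text{assigned points})$ and hence $d(x,q_j)>3r$. \emph{(ii) Greedy maximality:} because $q_j=\argmax\{|B(z,r)|:d(z,Q_j)>6r\}$ where $Q_j=\{q_1,\dots,q_{j-1}\}$, any point $z$ with $|B(z,r)|>s_j$ must lie within $6r$ of some earlier center $q_\ell$, $\ell<j$; then every point of $B(z,r)$ is within $7r$ of $q_\ell$, so by the ``first eligible center'' rule no point of $B(z,r)$ can have been attached to $D_j$ in the final loop — thus $B(z,r)\cap D_j$ lies entirely in the core of $D_j$, hence in $B(q_j,3r)$. \emph{(iii) Annulus/gap:} in the dense branch $D_j\supseteq B(q_j,r)\cup S_j$ with $|B(q_j,r)|=|S_j|=s_j$ and $d(u,w)>r$ for every $u\in B(q_j,r)$, $w\in S_j$; pairing these two sets and using $d(u,w)\le 2\max(d(z,u),d(z,w))$, for \emph{any} point $z$ at least $s_j$ of these $2s_j$ points are at distance $>r/2$ from $z$. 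In the sparse branch, by definition $|A_j|=|B(q_j,3r)\setminus B(q_j,2r)|<s_j$.

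\textbf{Part 2: assembling.} Suppose $|D_j\cap B(x,r/2)|>s_j$. Then $|B(x,r)|\ge|B(x,r/2)|>s_j$, so by (ii) $D_j\cap B(x,r/2)$ lies in the core of $D_j$. In the dense branch the core is $B(q_j,r)\cup S_j$, so (iii) applied with $z=x$ caps $|D_j\cap B(x,r/2)|$ at $s_j$ — a contradiction. In the sparse branch the core is $B(q_j,3r)$ and $d(x,q_j)>3r$ by (i), so every $u\in D_j\cap B(x,r/2)$ has $d(u,q_j)>3r-\frac r2>2r$, i.e.\ $u\in A_j$, giving $|D_j\cap B(x,r/2)|\le|A_j|<s_j$ — again a contradiction. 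Hence $|D_j\cap B(x,r/2)|\le s_j$ always. In the dense branch $|D_j|\ge 2s_j$, so this is $\le|D_j|/2$. In the sparse branch $d(x,q_j)>3r$ means $B(q_j,r)\subseteq D_j$ is disjoint from $B(x,r/2)$, so $|D_j\cap B(x,r/2)|\le|D_j|-s_j$, which is $\le|D_j|/2$ when $|D_j|\le 2s_j$ and $\le s_j\le|D_j|/2$ when $|D_j|>2s_j$. This yields the claim, and hence the $r/4$ bound.

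\textbf{Main obstacle.} The only genuine difficulty is that the points attached in the final assignment loop could a priori be numerous and clustered near $x$, destroying the average; everything hinges on observation (ii), which says that any neighborhood of $x$ denser than $B(q_j,r)$ was already ``claimed'' by an earlier center and therefore could not have contributed assignment-phase points to $D_j$. This is exactly the reason the algorithm selects centers by maximizing $|B(\cdot,r)|$ rather than arbitrarily, and pinning down this step cleanly (and its interplay with the annulus condition in the sparse case) is the crux of the proof.
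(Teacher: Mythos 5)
Your proposal is correct and follows essentially the same route as the paper's proof: the $7r$ bound on distances to $q_i$ for the diameter, and for the average-distance bound the same three ingredients (greedy maximality of $|B(\cdot,r)|$ ruling out assignment-phase points of $D_j$ near $x$, the $>r$ gap between $B(q_j,r)$ and $S_j$ in the dense branch, and the annulus bound $|A_j|<s_j$ together with disjointness of $B(x,r/2)$ from $B(q_j,r)$ in the sparse branch). Your reorganization—first showing $|D_j\cap B(x,r/2)|\le s_j$ and then converting to $|D_j|/2$, with the $|D_j|-s_j$ counting in the sparse case—is only a cosmetic variant of the paper's contradiction argument.
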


Given Lemma \ref{lem:main}, we can prove the the main result.
\begin{proof}[Proof of~\Cref{thm:main}]
We first argue correctness.
As each $c_i$ was chosen to maximize the minimal distance to points $c_j$ already in $S$, for any $x\in P$, it holds that $\min\{d(x,c_i)\mid i\in [k]\}\leq r_0$. 
By~\cref{lem:main}, in the clustering $\mathcal{D}$ output by \textsc{Ball-Carving}$(P,r_0/15)$ each cluster has diameter at most $\frac{14}{15}r_0<r_0$, and thus, for each $i\in [k]$, the cluster $D\in \mathcal{D}$ which contains $c_i$ will be included in $C_i$ in the final clustering. 
Indeed, in line~\ref{ln:final-assgn-begin} of~\cref{alg:main}, $d(c_i,D)=0$ whereas $d(c_j,D)\geq \frac{1}{15}r_0$ for all $j\neq i$. 
Thus, each cluster in $(C_1,\dots,C_k)$ is non-empty. Secondly, the diameter of each cluster is at most $4r_0$, namely, for each two points $x,x'\in C_i$, they are both within distance $r_0+\frac{14}{15}r_0<2r_0$ of $c_i$. 
Finally, by~\cref{lem:main}, for $x\in D_i$ and $j\neq i$, the average distance from $x$ to points of $D_j$ is at least $\frac{r_0}{60}$. 
Since, $\mathcal{C}$ is a coarsening of $\mathcal{D}$, i.e., each cluster of $\mathcal{C}$ is the disjoint union of some of the clusters in $\mathcal{D}$, it is straightforward to check that the same property holds for the clustering $\mathcal{C}$.
Thus $\mathcal{C}$ is $O(1)$-approximate IP stable.

We now analyze the running time. We claim that Algorithm \ref{alg:main} can be implemented to run in $O(n^2T + n^2 \log n)$ time, where $T$ is the time to compute the distance between any two points in the metric space. First, we can query all pairs to form the $n \times n$ distance matrix $A$. Then we sort $A$ along every row to form the matrix $A'$. Given $A$ and $A'$, we easily implement our algorithms as follows. 

First, we argue about the greedy $k$-center steps of Algorithm $2$, namely, the for loop on line~\ref{ln:k-center-begin}. The most straightforward implementation computes the distance from every point to new chosen centers. At the end, we have computed at most $nk$ distances from points to centers which can be looked up in $A$ in time $O(nk) = O(n^2)$ as $k \le n$. In line~\ref{ln:ball-carving-pre}, we only look at every entry of $A$ at most once so the total time is also $O(n^2)$. The same reasoning also holds for the for loop on line~\ref{ln:final-loop}. It remains to analyze the runtime.

Given $r$, Algorithm \ref{alg:main-subroutine} can be implemented as follows. First, we calculate the size of $|B(x, r)|$ for every point $x$ in our dataset. This can easily be done by binary searching on the value of $r$ along each of the (sorted) rows of $A'$, which takes $O(n \log n)$ time in total. We can similarly calculate the sizes of $|B(x,2r)|$ and $|B(x,3r)|$, and thus the number of points in the annulus $|B(x,3r)\setminus B(x,2r)|$ in the same time to initialize the clusters $D_i$. Similar to the $k$-center reasoning above, we can also pick the centers in Algorithm \ref{alg:main-subroutine} which are $> 6r$ apart iteratively by just calculating the distances from points to the chosen centers so far. This costs at most $O(n^2)$ time, since there are at most $n$ centers. After initializing the clusters $D_i$, we finally need to assign the remaining unassigned points (line~\ref{ln:assgn-begin}--\ref{ln:assgn-end}). This can easily be done in time $O(n)$ per point, namely for each unassigned point $x$, we calculate its distance to each $q_i$ assigning it to $D_i$ where $i$ is minimal such that $d(x,q_i)\leq 7r$. The total time for this is then $O(n^2)$. The $k$-center guarantees follow from our choice of $r_0$ and Lemma \ref{lem:main}.
\color{black}
\end{proof}
\begin{remark}
We note that the runtime can possibly be improved if we assume special structure about the metric space (e.g., Euclidean metric). See \cref{apx:remark} for a discussion.
\end{remark}

We now prove Lemma \ref{lem:main}.

\begin{proof}[Proof of \cref{lem:main}]
The upper bound on the diameter of each cluster follows from the fact that for any cluster $D_i$ in the final clustering $\mathcal{D}=\{D_1,\dots, D_t\}$, and any $x\in D_i$, it holds that $d(x,q_i)\leq 7r$. The main challenge is to prove the lower bound on the average distance from $x\in D_i$ to $D_j$ where $j\neq i$. 

Suppose for contradiction that, there exists $i,j$ with $i\neq j$ and $x\in D_i$ such that the average distance from $x$ to $D_j$ is smaller than $r/4$, i.e.,~$\frac{1}{|D_j|}\sum_{y\in D_j}d(x,y)<r/4$. Then, it in particular holds that $|B(x,r/2)\cap D_j|>|D_j|/2$, namely the ball of radius $r/2$ centered at $x$ contains more than half the points of $D_j$. We split the analysis into two cases corresponding to the if-else statements in line~\ref{ln:double-shell-begin}--\ref{ln:double-shell-end} of the algorithm.

\paragraph{Case 1: $|A_j|\geq s_j$:} In this case, cluster $D_j$ consists of at least $2s_j$ points, namely the $s_j$ points in $B(q_j,r)$ and the set $S_j$ of $s_j$ points in $A_j$ assigned to $D_j$ in line~\ref{ln:dense-ext-1}--\ref{ln:dense-ext-2} of the algorithm. It follows from the preceding paragraph that, $|B(x,r/2)\cap D_j|>s_j$. Now, when $q_j$ was added to $Q$, it was chosen as to maximize the number of points in $B(q_j,r)$ under the constraint that $q_j$ had distance greater than $6r$ to previously chosen points of $Q$. Since $|B(x,r)|\ge |B(x,r/2)|>|B(q_j,r)|$, at the point where $q_j$ was chosen, $Q$ already contained some point $q_{j_0}$ (with $j_0<j$) of distance at most $6r$ to $x$ and thus of distance at most $7r$ to any point of $B(x,r/2)$. It follows that $B(x,r/2)\cap D_j$ contains no point assigned during line~\ref{ln:assgn-begin}--~\ref{ln:assgn-end} of the algorithm. Indeed, by the assignment rule, such a point $y$ would have been assigned to either $D_{j_0}$ or potentially an even earlier initialized cluster of distance at most $7r$ to $y$. Thus, $B(x,r/2)\cap D_j$ is contained in the set $B(q_j,r)\cup S_j$. However, $|B(q_j,r)|=|S_j|=s_j$ and moreover, for $(y_1,y_2)\in B(q_j,r)\times S_j$, it holds that $d(y_1,y_2)>r$. In particular, no ball of radius $r/2$ can contain more than $s_j$ points of $B(q_j,r)\cup S_j$. As $|B(x,r/2)\cap D_j|>s_j$, this is a contradiction.
\paragraph{Case 2: $|A_j|< s_j$:} In this case, $D_j$ includes all points in $B(q_j,3r)$. As $x\notin D_j$, we must have that $x\notin B(q_j,3r)$ and in particular, the ball $B(x,r/2)$ does not intersect $B(q_j,r)$. Thus,
\[
|D_j|\geq |B(x,r/2)\cap D_j|+|B(q_j,r)\cap D_j|>|D_j|/2+s_j,
\]
so $|D_j|>2s_j$, and finally, $|B(x,r/2)\cap D_j|>|D_j|/2>s_j$. Similarly to case 1, $B(x,r/2)\cap D_j$ contains no points assigned during line~\ref{ln:assgn-begin}--~\ref{ln:assgn-end} of the algorithm. Moreover, $B(x,r/2)\cap B(q_j,3r)\subseteq A_j$. In particular, $B(x,r/2)\cap D_j\subseteq S_j$, a contradiction as $|S_j|=s_j$ but $|B(x,r/2)\cap D_j|>s_j$.
\end{proof}

\section{Min and Max-IP Stable Clustering}
The Min-IP stable clustering aims to ensure that for any point $x$, the \emph{minimum} distance to a point in the cluster of $x$ is at most the minimum distance to a point in any other cluster.  
We show    that a Min-IP stable $k$-clustering always exists for any value of $k\in [n]$ and moreover, can be found by a simple algorithm (\cref{alg:min-stab}).

\begin{algorithm}[H]
\caption{\label{alg:min-stab} \textsc{Min-IP-Clustering}}
\begin{algorithmic}[1]
\STATE \textbf{Input}: Pointset $P=\{x_1,\dots,x_n\}$ from a metric space $(M, d)$ and integer $k$ with $2\leq k\leq n$. 
\STATE \textbf{Output}: $k$-clustering $\mathcal{C} = (C_1,\dots,C_k)$ of $P$.
\vspace*{0.5em}
\STATE $L\la \{(x_i,x_j)\}_{1\leq i <j \leq n}$ sorted according to $d(x_i,x_j)$\STATE $E\la\emptyset$
\WHILE{$G=(P,E)$ has $>k$ connected components}
\STATE $e\la$ an edge $e=(x,y)$ in $L$ with $d(x,y)$ minimal.
\STATE $L\la L\setminus\{e\}$
\STATE \textbf{if} {$e$ connects different connected components of $G$} {\bf then} $E\la E\cup\{e\}$
\ENDWHILE
\RETURN the connected components $(C_1,\dots,C_k)$ of $G$.
\end{algorithmic}
\end{algorithm}

The algorithm is identical to Kruskal's algorithm for finding a minimum spanning tree except that it stops as soon as it has constructed a forest with $k$ connected components. First, it initializes a graph $G=(V,E)$ with $V=P$ and $E=\emptyset$. Next, it computes all distances $d(x_i,x_j)$ between pairs of points $(x_i,x_j)$ of $P$ and sorts the pairs $(x_i,x_j)$ according to these distances. Finally, it goes through this sorted list adding each edge $(x_i,x_j)$ to $E$ if it connects different connected components of $G$. After computing the distances, it is well known that this algorithm can be made to run in time $O(n^2\log n)$, so the total running time is $O(n^2(T+\log n))$ where $T$ is the time to compute the distance between a single pair of points. 

\begin{theorem}\label{thm:min-IP-stable}
The $k$-clustering output by~\Cref{alg:min-stab} is a Min-IP stable clustering.
\end{theorem}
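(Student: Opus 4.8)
The plan is to show that every point $x$ satisfies the Min-IP stability condition, i.e.\ that the minimum distance from $x$ to another point in its own component $C(x)$ is at most the minimum distance from $x$ to any point of any other component. Fix a point $x$ and let $C = C(x)$ be its component in the final forest $G = (P,E)$. Since $k \le n$ but the algorithm only stops when there are exactly $k$ components, I should first handle the degenerate case where $C = \{x\}$ is a singleton; then the stability condition is vacuous by definition, so assume $|C| \ge 2$.

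The key observation is a standard property of Kruskal's algorithm: when the algorithm terminates, for any point $y \notin C$, consider the edge $e = (x,y)$. When the algorithm processed $e$ in the sorted list (or would have, had it not terminated first — but note every pair appears in $L$ and the loop only exits on the component count, so I need to be slightly careful about whether $e$ was actually examined), $x$ and $y$ were in different components, and $e$ was not added, which can only happen if $e$ was never reached before termination. So the real claim is: at termination, every edge $(x,y)$ with $y \notin C$ has length at least as large as some edge already added inside $C$ that connects $x$ to the rest of $C$. Concretely, I would argue: let $y^\* = \argmin_{y \notin C} d(x,y)$ and let $e^\* = (x, y^\*)$. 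Because $x$ and $y^\*$ lie in different components of the final $G$, the edge $e^\*$ was never added; since Kruskal processes edges in nondecreasing order of length and adds every edge that merges two components, $e^\*$ must still be sitting in $L$ at termination (otherwise it would have been added, as it always connects distinct components until the end). Hence every edge actually added to $E$ has length $\le d(x, y^\*)$. In particular, since $|C| \ge 2$, $x$ is incident to at least one edge of $E$ inside $C$ (the component is connected via edges of $E$ and $x$ must be joined to it by some path; more carefully, $x$ got merged into $C$ at some step via an edge incident to $x$'s then-current component — I may need the sharper statement that $x$ itself has an incident edge in the spanning forest, which holds because when $x$'s singleton component first merged, the merging edge was incident to $x$). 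That incident edge $(x, z)$ with $z \in C$ has length $\le d(x, y^\*)$, giving $\min_{z \in C \setminus \{x\}} d(x,z) \le d(x,z) \le d(x,y^\*) = \min_{y \notin C} d(x,y)$, which is exactly Min-IP stability for $x$.

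The main obstacle, and the step I would be most careful about, is the claim that $x$ has an edge of the final forest incident to it whenever $|C| \ge 2$ — Kruskal could in principle attach $x$ to $C$ without using an edge at $x$ if $x$ started in a larger component, but here every component starts as a singleton, so the first time $x$'s component grows it is via an edge with one endpoint equal to $x$; that edge is in $E$ and its length is bounded as above. The other point needing care is justifying that $e^\* = (x,y^\*)$ is genuinely still unprocessed at termination: this follows because any edge of length $\le d(x,y^\*)$ that connected distinct components was added, and the algorithm never "skips" an edge it reaches — it either adds it or discards it as internal — so if $e^\*$ had been reached it would have been added (it connects distinct components throughout the run, since $x,y^\*$ never end up together), contradicting $y^\* \notin C$. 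I would wrap these two lemmatic facts into one or two sentences each and then the inequality chain closes the proof.
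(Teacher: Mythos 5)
Your proposal is correct and follows essentially the same route as the paper's proof: a Kruskal ordering argument showing that the edge from $x$ to its nearest point outside $C(x)$ is never processed (it always joins distinct components, and $x$ never merges with that point), so some forest edge incident to $x$ inside $C(x)$ — which exists since $|C(x)|\geq 2$ — has length at most $\min_{y\notin C(x)}d(x,y)$. The paper phrases this by comparing the added edge $(x,y_0)$ with the still-available candidate $(x,y_1)$ at the moment of insertion, while you bound all added edges at once; the two arguments are the same in substance.
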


\begin{proof}
Let $\mathcal{C}$ be the clustering output by the algorithm. Conditions (1) and (2) in the definition of a min-stable clustering are trivially satisfied. To prove that (3) holds, let $C\in\mathcal{C}$ with $|C|\geq 2$ and $x\in C$. Let $y_0\neq x$ be a point in $C$ such that $(x,y_0)\in E$ (such an edge exists because $C$ is the connected component of $G$ containing $x$) and let $y_1$ be the closest point to $x$ in $P\setminus C$. When the algorithm added $(x,y_0)$ to $E$, $(x,y_1)$ was also a candidate choice of an edge between connected components of $G$. Since the algorithm chose the edge of minimal length with this property, $d(x,y_0)\leq d(x,y_1)$. Thus, we get the desired bound:
\[
\min_{y\in C\setminus\{x\}}d(x,y)\leq d(x,y_0)\leq d(x,y_1)= \min_{y\in P\setminus C} d(x,y). \qedhere
\]
\end{proof}

\begin{theorem}\label{thm:max-IP-stable}
The solution output by the greedy algorithm of $k$-center is a $3$-approximate Max-IP stable clustering.
\end{theorem}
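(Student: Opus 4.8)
The plan is to lean on the standard structural properties of the Gonzalez greedy $k$-center algorithm. Let $c_1,\dots,c_k$ be the centers it picks (each maximizing the distance to the previously chosen ones), and let the output clustering be $C_1,\dots,C_k$, where $C_i$ consists of the points whose nearest center is $c_i$; note $c_i\in C_i$, so every cluster is nonempty and $\mathcal C$ is a valid $k$-clustering. Write $\rho=\max_{x\in P}d(x,\{c_1,\dots,c_k\})$ for the $k$-center cost of this solution.

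First I would record two facts. (i) For every $i$ and every $u\in C_i$ we have $d(u,c_i)\le\rho$, since $c_i$ is the center nearest to $u$. (ii) For all $i\neq j$ we have $d(c_i,c_j)\ge\rho$: assuming $i<j$, when $c_j$ was chosen it maximized the distance to $\{c_1,\dots,c_{j-1}\}$, so for the point $x_0$ achieving $\rho$ we get $d(c_j,\{c_1,\dots,c_{j-1}\})\ge d(x_0,\{c_1,\dots,c_{j-1}\})\ge d(x_0,\{c_1,\dots,c_k\})=\rho$ (adding centers only decreases distances), and hence $d(c_i,c_j)\ge d(c_j,\{c_1,\dots,c_{j-1}\})\ge\rho$.

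Next, fix $v\in C_i$ with $|C_i|\ge 2$ (otherwise the Max-IP condition is vacuous) and fix $j\neq i$; set $a=d(v,c_i)$, so $a\le\rho$ by (i). For the left-hand side of the Max-IP inequality, any $u\in C_i\setminus\{v\}$ satisfies $d(v,u)\le d(v,c_i)+d(c_i,u)\le a+\rho$ by the triangle inequality and (i). For the right-hand side, since $c_j\in C_j$ we have $\max_{u\in C_j}d(v,u)\ge d(v,c_j)$, and I can lower bound $d(v,c_j)$ in two ways: $d(v,c_j)\ge d(v,c_i)=a$ because $v$ is assigned to its nearest center $c_i$, and $d(v,c_j)\ge d(c_i,c_j)-d(v,c_i)\ge\rho-a$ by the triangle inequality together with (ii). Hence $\max_{u\in C_j}d(v,u)\ge\max(a,\rho-a)$. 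It then remains to check the elementary inequality $a+\rho\le 3\max(a,\rho-a)$: if $a\ge\rho/2$, then $\max(a,\rho-a)=a$ and $a+\rho\le a+2a=3a$; if $a<\rho/2$, then $\max(a,\rho-a)=\rho-a$ and $a+\rho\le 3(\rho-a)$ is equivalent to $4a\le 2\rho$, which holds. In both cases $\max_{u\in C_i\setminus\{v\}}d(v,u)\le 3\max_{u\in C_j}d(v,u)$, so the clustering is $3$-approximate Max-IP stable.

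The only delicate point is squeezing out the constant $3$ rather than $4$: the cruder estimate $\rho\le d(c_i,c_j)\le 2\,d(v,c_j)$ alone only gives $d(v,c_j)\ge\rho/2$ and hence a factor $4$, so it is essential to retain both lower bounds $a$ and $\rho-a$ on $d(v,c_j)$ and perform the short two-case argument. Everything else is routine triangle-inequality bookkeeping on top of the standard greedy $k$-center facts (i)--(ii).
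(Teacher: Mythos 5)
Your proof is correct and follows essentially the same route as the paper: the Gonzalez centers with $k$-center cost $\rho$, the facts $d(u,c_i)\le\rho$ within clusters and $d(c_i,c_j)\ge\rho$ between centers, and a two-case split on whether $d(v,c_i)$ exceeds $\rho/2$ to get the factor $3$ (the paper's Case 1/Case 2 is exactly your $a<\rho/2$ versus $a\ge\rho/2$). The only difference is cosmetic: you spell out the standard greedy facts and isolate the elementary inequality $a+\rho\le 3\max(a,\rho-a)$, which the paper leaves implicit.
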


\begin{proof}
To recall, the greedy algorithm of $k$-center (aka Gonzalez algorithm~\cite{gonzalez1985clustering}) starts with an arbitrary point as the first center and then goes through $k-1$ iterations. In each iteration, it picks a new point as a center which is furthest from all previously picked centers. Let $c_1, \cdots, c_k$ denote the selected centers and let $r := \max_{v\in P} d(v, \{c_1, \cdots, c_k\})$. Then, each point is assigned to the cluster of its closest center. We denote the constructed clusters as $C_1, \cdots, C_k$. Now, for every $i\neq j\in [k]$ and each point $v\in C_i$, we consider two cases:
\begin{itemize}
    \item{$d(v, c_i) \le r/2$.} Then
        \begin{align*}
            \max_{u_i\in C_i} d(v,u_i) &\le d(v, c_i) + d(u_i, c_i) \le 3r/2, \\
            \max_{u_j\in C_j} d(v,u_j) &\ge d(v, c_j) \ge d(c_i, c_j) - d(v, c_i) \ge r/2.
        \end{align*}    
    \item{$d(v, c_i) > r/2$.} Then
        \begin{align*} 
            \max_{u_i\in C_i} d(v,u_i) &\le d(v, c_i) + d(u_i, c_i) \le 3d(v, c_i), \\
            \max_{u_j\in C_j} d(v,u_j) &\ge d(v, c_j) \ge d(v, c_i).
        \end{align*}
\end{itemize}
In both cases, $\max_{u_i\in C_i} d(v,u_i) \le 3 \max_{u_j\in C_j} d(v, u_j)$.
\end{proof}

\section{Experiments}\label{sec:experiment}
While the goal and the main contributions of our paper are mainly theoretical, we also implement our optimal Min-IP clustering algorithm as well as extend the experimental results for IP stable clustering given in \cite{ahmadi2022individual}.
Our experiments demonstrate that our optimal Min-IP stable clustering algorithm is superior to $k$-means++, the strongest baseline in \cite{ahmadi2022individual}, and show that our IP clustering algorithm for average distances is practical on real world datasets and is competitive to $k$-means++ (which fails to find good stable clusterings in the worst case~\cite{ahmadi2022individual}). We give our experimental results for Min-IP stability and defer the rest of the empirical evaluations to Section \ref{sec:additional_experiments}.
All experiments were performed in Python 3. The results shown below are an average of $10$ runs for $k$-means++. 

\paragraph{Metrics} 

We measure the quality of a clustering using the same metrics used in~\cite{ahmadi2022individual} for standardization. Considering the question of $f$-IP stability (Definition \ref{def:f_stability}), let the violation of a point $x$ be defined as $\violation(x) = \max_{C_i\neq C(x)}\frac{f(x, C(x) \setminus \{x\} )}{f(x, C_i)}.$

For example, setting $f(x, C) = \sum_{y \in C} d(x,y)/|C|$ corresponds to the standard IP stability objective and  $f(x, C) = \min_{y \in C} d(x,y)$ is the Min-IP formulation. Note point $x$ is stable iff $\violation(x)\leq 1$. 

We measure the extent to which a $k$-clustering~$\mathcal{C}=(C_1,\ldots,C_k)$ of $P$ is \mbox{(un)stable} by computing $\Maxviol=\max_{x\in P} \violation(x)$ (maximum violation) and $\Meanviol = \sum_{x \in P} \violation(x)/|P|$ (mean violation).

\paragraph{Results}
For Min-IP stability, we have an optimal algorithm; it always returns a stable clustering for all $k$. We see in Figures \ref{fig:min_ip_maxvi} that for the max and mean violation metrics, our algorithm outperforms $k$-means++ by up to a factor of \textbf{5x}, consistently across various values of $k$. $k$-means ++ can return a much worse clustering under Min-IP stability on real data, motivating the use of our theoretically-optimal algorithm in practice.

\begin{figure}[h]
\centering
\includegraphics[width=0.48\columnwidth]{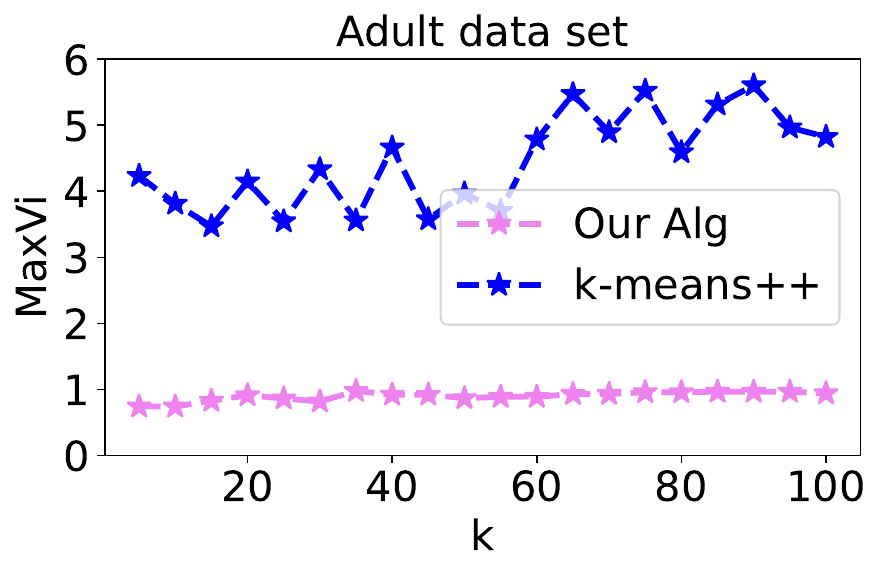}
\includegraphics[width=0.48\columnwidth]{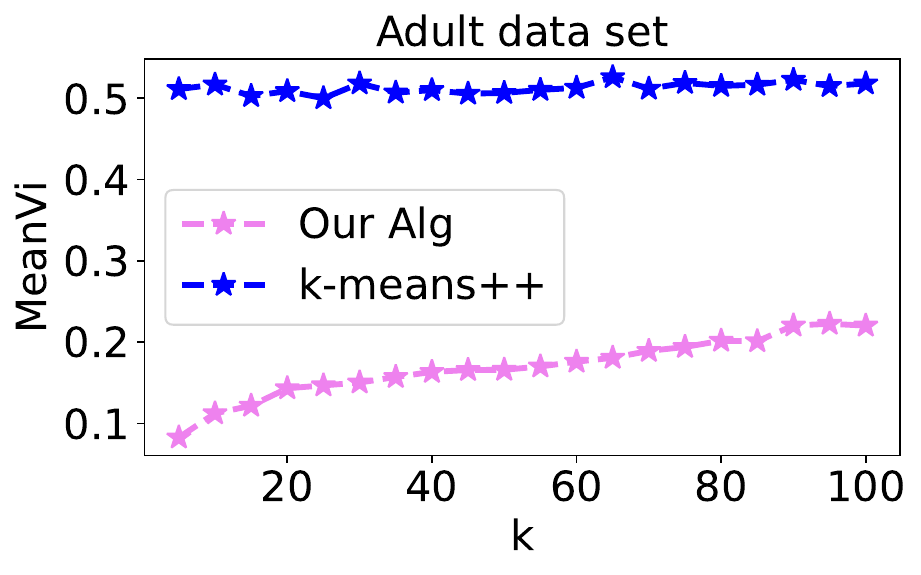}
\vspace{-1mm}
\caption{Maximum and mean violation for Min-IP stability for the Adult dataset, as used in \cite{ahmadi2022individual}; lower values are better. \label{fig:min_ip_maxvi}}
\vspace{-2mm}
\end{figure}

\section{Conclusion}
We presented a deterministic polynomial time algorithm which provides an $O(1)$-approximate IP stable clustering of $n$ points in a general metric space, improving on prior works which only guaranteed an  $O(n)$-approximate IP stable clustering. We also generalized IP stability to $f$-stability and provided an algorithm which finds an exact Min-IP stable clustering and a 3-approximation for Max-IP stability, both of which hold for all $k$ and in general metric spaces.
\paragraph{Future directions}
There are multiple natural open questions following our work.
\begin{itemize}
    \item Note that in some cases, an $\alpha$-IP stable clustering for $\alpha < 1$ may exist. On the other hand, in the hard example on $n=4$ from~\cite{ahmadi2022individual}, we know that there some constant $C > 1$ such that no $C$-IP stable clutering exists. For a given input, let $\alpha^*$ be the minimum value such that an $\alpha^*$-IP stable clustering exists. Is there an efficient algorithm which returns an $O(\alpha^*)$-IP stable clustering? Note that our algorithm satisfies this for $\alpha = \Omega(1)$. An even stronger result would be to find a PTAS which returns a $(1+\eps)\alpha^*$-IP stable clustering.
    \item For what specific metrics (other than the line or tree metrics with $k=2$) can we get $1$-IP stable clusterings efficiently?
    \item In addition to stability, it is desirable that a clustering  algorithm also achieves strong global welfare guarantee. Our algorithm gives constant  approximation for $k$-center. What about other standard objectives, such as $k$-median and $k$-means?
\end{itemize}

\bibliographystyle{alpha}
\bibliography{ip-cluster}

\newpage
\appendix
\onecolumn
\section{Discussion on the Run-time of \cref{alg:main}}\label{apx:remark}
We remark that the runtime of our $O(1)$-approximate IP-stable clustering algorithm can potentially be improved if we assume special structure about the metric space, such as a tree or Euclidean metric. In special cases, we can improve the running time by appealing to particular properties of the metric which allow us to either calculate distances or implement our subroutines faster. For example for tree metrics, all distances can be calculated in $O(n^2)$ time, even though $T = O(n)$. Likewise for the Euclidean case, we can utilize specialized algorithms for computing the all pairs distance matrix, which obtain speedups over the naive methods \cite{IndykSilwal2022}, or use geometric point location data structures to quickly compute quantities such as $|B(x,r)|$ \cite{Snoeyink04}. Our presentation is optimized for simplicity and generality so detailed discussions of specific metric spaces are beyond the scope of the work.

\section{Random Clustering in Unweighted Graphs}\label{app:random-color}
In this appendix, we show that for unweigthed, undirected, graphs (where the distance $d(u,v)$ between two vertices $u$ and $v$ is the length of the shortest path between them), randomly $k$-coloring the nodes gives an $O(1)$-approximate IP-stable clustering whenever $k=O(n^{1/2}/\log n)$.

We start with the following lemma.
\begin{lemma}\label{lemma:random-color}
Let $\gamma=O(1)$ be a constant. There exists a constant $c>0$ (depending on $\gamma$) such that the following holds: Let $T=(V,E)$ be an unweighted tree on $n$ nodes rooted at vertex $r$. Suppose that we randomly $k$-color the nodes of $T$. Let $V_i\subseteq V$ be the nodes of color $i$, let $X_i=\sum_{v\in V_i}d(r,v)$, and let $X=\sum_{v\in V}d(r,v)$. If $k\leq c \frac{\sqrt{n}}{\log n}$, then with probability $1-O(n^{-\gamma})$, it holds that $X/2\leq kX_i\leq 2X$ for all $i\in[k]$.
\end{lemma}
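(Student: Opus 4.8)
The plan is to treat each color-class sum $X_i$ as a sum of independent bounded random variables and apply a Bernstein-type tail bound, after first establishing a geometric inequality that forces $X$ to be large whenever $T$ is deep. Fix a color $i\in[k]$ and for $v\in V$ let $Y_v=\mathbf{1}[v\text{ has color }i]$; for this fixed $i$ the variables $\{Y_v\}_{v\in V}$ are i.i.d.\ $\mathrm{Bernoulli}(1/k)$ and $X_i=\sum_{v\in V}d(r,v)\,Y_v$, so $\E[X_i]=X/k$. The event $X/2\le kX_i\le 2X$ is exactly $|X_i-\E[X_i]|\le X/(2k)$, so it suffices to prove that this holds for each fixed $i$ with probability $1-O(n^{-\gamma-1})$ and then take a union bound over the $k\le n$ colors.

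The key structural observation is the following. Let $D=\max_{v\in V}d(r,v)$ be the depth of $T$. The vertices on a shortest path from $r$ to a deepest vertex have distances $1,2,\dots,D$ to $r$, and since $X$ sums nonnegative distances over all of $V$ this gives $X\ge 1+2+\dots+D=D(D+1)/2\ge D^2/2$, i.e.\ $D\le\sqrt{2X}$. We also have $X\ge n-1$, since every non-root vertex is at distance at least $1$ from $r$. This pair of facts is what makes the argument work for every tree shape: when $T$ is a path $X=\Theta(n^2)$ and concentration is easy, whereas when $T$ is bushy (e.g.\ a star) the trivial bound $d(r,v)\le n-1$ is useless but $D$ is tiny, which is precisely what $D\le\sqrt{2X}$ exploits.

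With these in hand, set $Z_v=d(r,v)(Y_v-1/k)$, so $X_i-\E[X_i]=\sum_{v}Z_v$ is a sum of independent mean-zero variables with $|Z_v|\le d(r,v)\le D\le\sqrt{2X}$ and $\sum_v\E[Z_v^2]\le\frac{1}{k}\sum_v d(r,v)^2\le\frac{D}{k}\sum_v d(r,v)=\frac{DX}{k}\le\frac{\sqrt{2}\,X^{3/2}}{k}$. Applying Bernstein's inequality with deviation $t=X/(2k)$, the exponent simplifies to $\Theta(\sqrt{X}/k)$; plugging in $X\ge n-1$ and $k\le c\sqrt{n}/\log n$ makes this exponent exceed $(\gamma+1)\log n$ once $c=c(\gamma)$ is a small enough constant, so $\Pr[\,|X_i-\E[X_i]|>X/(2k)\,]\le 2n^{-(\gamma+1)}$. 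A union bound over the at most $n$ colors then gives the lemma.

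The main obstacle is the structural step: one must notice that a deep tree automatically has large $X$, so that in every regime the Bernstein exponent is $\gtrsim\sqrt{X}/k\gtrsim\sqrt{n}/k\gtrsim\log n$. Given that, the remainder is a routine application of a standard concentration inequality followed by a union bound.
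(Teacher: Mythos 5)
Your proof is correct and follows essentially the same route as the paper's: the same structural observation that the root-to-deepest-vertex path forces $X\gtrsim \Delta^2$ while $X\geq n-1$ (the paper packages this as $X\geq n+\Delta^2/3$ and applies AM--GM, where you instead substitute $D\leq\sqrt{2X}$ and $X\geq n-1$ into a Bernstein exponent), followed by a standard concentration bound for independent summands bounded by the depth and a union bound over the $k$ colors. The only nitpick is that the event $X/2\leq kX_i\leq 2X$ is implied by, not ``exactly'', $|X_i-\E[X_i]|\leq X/(2k)$ (the allowed upward deviation is $X/k$), but since the implication goes in the direction you need, this changes nothing.
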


\begin{proof}
 
We will fix $i$, and prove that the bound $X/2\leq X_i\leq 2X$ holds with probability $1-O(n^{-\gamma-1})$. Union bounding over all $i$ then gives the desired result. Let $\Delta=\max_{v\in V}d(r,v)$ be the maximum distance from the root to any vertex of the tree. We may assume that $\Delta\geq 5$ as otherwise the result follows directly from a simple Chernoff bound. Since the tree is unweighted and there exists a node $v$ of distance $\Delta$ to $r$, there must also exist nodes of distances $1,2,\dots,\Delta-1$ to $r$, namely the nodes on the path from $r$ to $v$. For the remaining nodes, we know that the distance is at least $1$. Therefore,
\[
\sum_{v\in V}d(r,v)\geq (n-\Delta-1 )+\sum_{j=1}^\Delta j=n+\binom{\Delta}{2}-1\geq n+\frac{\Delta^2}{3},
\]
and so $\mu_i=\E[X_i]\geq \frac{n+\Delta^2/3}{k}$. Since the variables $\left(d(r,v)[v\in V_i]\right)_{v\in V}$ sum to $X_i$ and are independent and bounded by $\Delta$, it follows by a Chernoff bound that for any $0\leq \delta\leq 1$,
\[
\Pr\left[|X_i-\mu_i|\geq \delta \mu_i\right]\leq 2\exp\left(-\frac{\delta^2\mu_i}{3\Delta} \right).
\]
By the AM-GM inequality, \[\frac{\mu_i}{\Delta}\geq \frac{1}{k}\left(\frac{n}{\Delta}+\frac{\Delta}{3}\right)\geq \frac{2\sqrt{n}}{\sqrt{3}k}.\] Putting $\delta=1/2$, the bound above thus becomes
\begin{align*}
\Pr\left[|X_i-\mu_i|\geq  \frac{\mu_i}{3}\right]\leq 2\exp\left(-\frac{\sqrt{n}}{6\sqrt{3}k}\right) \leq 2\exp \left(-\frac{\sqrt{n}}{11k}\right) \leq 2n^{-\frac{1}{11c}},
\end{align*}
where the last bound uses the assumption on the magnitude of $k$ in the lemma. Choosing $c=\frac{1}{11(\gamma+1)}$, the desired result follows. 
\end{proof}

Next, we state our result on the $O(1)$-approximate IP-stability for randomly colored graphs.
\begin{theorem}\label{thm:random-color}
Let $\gamma=O(1)$ and $k\leq c \frac{\sqrt{n}}{\log n}$ for a sufficiently small constant $c$. Let $G=(V,E)$ be an unweighted, undirected graph on $n$ nodes, and suppose that we $k$-color the vertices of $G$ randomly. Let $V_i$ denote the nodes of color $i$. With probability at least $1-n^{-\gamma}$, $(V_1,\dots, V_k)$ forms an $O(1)$-approximate IP-clustering.
\end{theorem}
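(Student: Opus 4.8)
}
The plan is to reduce the graph statement to the tree statement of \Cref{lemma:random-color} by rooting a shortest‑path tree at \emph{every} vertex, and then to combine this with a crude concentration bound on the color‑class sizes. (We assume $G$ is connected, so that shortest‑path distances are finite; otherwise one would argue componentwise, and in any case the distance fails to be a finite metric for disconnected $G$, so we set that case aside.) First I would fix a vertex $v\in V$ and let $T_v$ be a BFS (shortest‑path) tree of $G$ rooted at $v$: this is a spanning tree of $G$ with the key property that $d_{T_v}(v,u)=d_G(v,u)$ for every $u\in V$. The random $k$‑coloring of $G$ restricts to a uniform random $k$‑coloring of $T_v$, so \Cref{lemma:random-color} applies to $T_v$ with root $r=v$. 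Writing $X_i^{(v)}=\sum_{u\in V_i}d_G(v,u)$ and $X^{(v)}=\sum_{u\in V}d_G(v,u)=\sum_{i\in[k]}X_i^{(v)}$, and invoking the lemma with its exponent parameter raised from $\gamma$ to $\gamma+1$ (and $c$ chosen to be the correspondingly small constant), we get that with probability $1-O(n^{-\gamma-1})$,
\[
\frac{X^{(v)}}{2}\ \le\ k\,X_i^{(v)}\ \le\ 2\,X^{(v)}\qquad\text{for every } i\in[k].
\]
A union bound over the $n$ choices of $v$ makes this hold for all $v$ simultaneously with probability $1-O(n^{-\gamma})$.

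Next I would handle the class sizes: since the colors are i.i.d.\ uniform, $\E|V_i|=n/k\geq \sqrt{n}\,\log n/c=\omega(\log n)$, so a standard Chernoff bound plus a union bound over $i\in[k]$ gives, with probability $1-o(n^{-\gamma})$, that $\tfrac{n}{2k}\le|V_i|\le\tfrac{2n}{k}$ for all $i$; in particular $|V_i|\ge 2$ and $|V_i|-1\ge|V_i|/2\ge\tfrac{n}{4k}$ for $n$ large. Now condition on these events (which together fail with probability $O(n^{-\gamma})$; using exponent $\gamma+2$ instead yields the clean bound $1-n^{-\gamma}$), fix $v\in V$, and let $i$ be its color. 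If $|V_i|=1$, then $v$ is a singleton cluster and is trivially IP‑stable, so assume $|V_i|\ge 2$. For any color $j\ne i$, the ratio in \Cref{def:approx-ip} (the $u=v$ term contributes $0$) is
\[
\frac{\tfrac{1}{|V_i|-1}\sum_{u\in V_i}d(v,u)}{\tfrac{1}{|V_j|}\sum_{u\in V_j}d(v,u)}
=\frac{X_i^{(v)}}{X_j^{(v)}}\cdot\frac{|V_j|}{|V_i|-1}
\ \le\ \frac{2X^{(v)}/k}{X^{(v)}/(2k)}\cdot\frac{2n/k}{n/(4k)}
= 4\cdot 8 = 32 ,
\]
using $X_i^{(v)}\le 2X^{(v)}/k$, $X_j^{(v)}\ge X^{(v)}/(2k)$, $|V_j|\le 2n/k$, and $|V_i|-1\ge n/(4k)$. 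Hence every vertex is $32$‑IP‑stable, i.e.\ $(V_1,\dots,V_k)$ is an $O(1)$‑approximate IP clustering.

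I do not expect a serious obstacle here: the argument is short once \Cref{lemma:random-color} is in hand. The one point that must be gotten right is that the lemma has to be applied to a \emph{different} tree for each root — the shortest‑path tree $T_v$ — so that tree distances coincide with graph distances \emph{as measured from $v$}; applying it to a single fixed spanning tree would only control distances from one vertex and would not give IP stability for all points. The remaining ingredients (boosting the failure exponent so the union bound over the $n$ roots survives, and the Chernoff bound on $|V_i|$, which is very loose since $n/k\gg\log n$) are routine.
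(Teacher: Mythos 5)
Your proposal is correct and follows essentially the same route as the paper's proof: apply \Cref{lemma:random-color} to the BFS (shortest-path) tree rooted at each vertex so tree distances from the root equal graph distances, add a Chernoff bound showing every $|V_i|=\Theta(n/k)$, and union bound over the $n$ roots. Your write-up is somewhat more explicit about the constants and the boosted failure exponent, but the decomposition and key lemma are identical to the paper's.
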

\begin{proof}
Consider a node $u$ and let $X_i=\sum_{v\in V_i\setminus \{u\}}d(u,v)$. Node that the distances $d(u,v)$ are exactly the distances in a breath first search tree rooted at $v$. Thus, by~\cref{lemma:random-color}, the $X_i$'s are all within a constant factor of each other with probability $1-O(n^{-\gamma-1})$. Moreover, a simple Chernoff bound shows that with the same high probability, $|V_i|=\frac{n}{k}+O\left(\sqrt{\frac{n\log n}{k}}\right)=\Theta\left(\frac{n}{k}\right)$ for all $i\in [k]$. In particular, the values $Y_i=\frac{X_i}{| V_i\setminus \{u\}|}$ for $i\in[k]$ also all lie within a constant factor of each other which implies that $u$ is $O(1)$-stable in the clustering $(V_1,\dots,V_k)$. Union bounding over all nodes $u$, we find that with probability $1-O(n^{-\gamma})$, $(V_1,\dots,V_k)$ is an $O(1)$-approximate IP-clustering.
\end{proof}

\begin{remark}
The assumed upper bound on $k$ in Theorem~\ref{thm:random-color} is necessary (even in terms of $\log n$). Indeed, consider a tree $T$ which is a star $S$ on $n-k\log k$ vertices along with a path $P$ of length $k\log k$ having one endpoint at the center $v$ of the star. With probability $\Omega(1)$, some color does not appear on $P$. We refer to this color as color $1$. Now consider the color of the star center. With probability at least $9/10$, say, this color is different from $1$ and appears $\Omega(\log k)$ times on $P$ with average distance $\Omega(k\log k)$ to the star center $v$. Let the star center have color $2$. With high probability, each color appears $\Theta(n/k)$ times in $S$. Combining these bounds, we find that with constant probability, the average distance from $v$ to vertices of color $1$ is $O(1)$, whereas the average distance from $v$ to vertices of color $2$ is $\Omega\left(1+ \frac{k^2(\log k)^2}{n}\right)$.
In particular for the algorithm to give an $O(1)$-approximate IP-stable clustering, we need to assume that $k=O\left(\frac{\sqrt{n}}{\log n}\right)$.
\end{remark}

\section{Additional Empirical Evaluations}\label{sec:additional_experiments}
We implement our $O(1)$-approximation algorithm for IP-clustering. These experiments extend those of \cite{ahmadi2022individual} and confirm their experimental findings: $k$-means++ is a strong baseline for IP-stable clustering. Nevertheless, our algorithm is competitive with it while guaranteeing robustness against worst-case datasets, a property which $k$-means++ does not posses.

Our datasets are the following. 
There are three datasets from \cite{Dua:2019} used in \cite{ahmadi2022individual}, namely, \texttt{Adult}, \texttt{Drug}~\cite{fehrman2017factor}, and \texttt{IndianLiver}. We also add two additional datasets from UCI Machine Learning Repository~\cite{Dua:2019}, namely, 
\texttt{BreastCancer} and 
\texttt{Car}.
For IP-clustering, we also consider a synthetic dataset which is the hard instance for $k$-means++ given in \cite{ahmadi2022individual}.

Our goal is to show that our IP-clustering algorithm is practical and in real world datasets, is competitive with respect to $k$-means++, which was the best algorithm in the experiments in \cite{ahmadi2022individual}. Furthermore, our algorithm is robust and outperform $k$-means++ for worst case datasets.

As before, all experiments were performed in Python 3. We use the $k$-means++ implementation of Scikit-learn package \cite{scikit-learn}.
We note that in the default implementation in Scikit-learn, $k$-means++ is initiated many times with different centroid seeds. The output is the best of $10$ runs by default. As we want to have control of this behavior, we set the parameter {\texttt{n\_init=1}} and then compute the average of many different runs.

Additionally to the metrics used in the main experimental section, we also compute the number of unstable points, defined as the size of the set $U=\{x\in M: \text{$x$ is not stable}\}$. In terms of clustering qualities, we additionally measure three quantities. First, we measure ``cost'', which is the average within-cluster distances. Formally, $Cost = \sum_{i=1}^k \frac{1}{{{|C_i|} \choose 2}} \sum_{x,y \in C_i, x\neq y} d(x,y)$. We then measure $k$-center costs, defined as the maximum distances from any point to its center. Here, centers are given naturally from k-means++ and our algorithm. Finally, $k$-means costs, defined as $\text{k-means-cost} = \sum_{i=1}^k \frac{1}{|C_i|} \sum_{x,y \in C_i, x\neq y}d(x,y)^2$.

\subsection{Hard Instance for $k$-means++ for IP-Stability} 
We briefly describe the hard instance for $k$-means++ for the standard IP-stability formulation given in~\cite{ahmadi2022individual}; see their paper for full details. The hard instance consists of a gadget of size $4$. In the seed-finding phase of $k$-means++, if it incorrectly picks two centers in the gadget, then the final clustering is not $\beta$-approximate IP-stable, where $\beta$ is a configurable parameter. The instance for $k$-clustering is produced by concatenating these gadgets together.
In such an instance, with a constant probability, the clustering returned by $k$-means++ is not $\beta$-approximate IP-stable and in particular. We remark that the proof of Theorem 2 in \cite{ahmadi2022individual} easily implies that $k$-means++ cannot have an approximation factor better than $n^{c}$ for some absolute constant $c > 0$, i.e., we can insure $\beta = \Omega(n^c)$. Here, we test both our algorithm and $k$-means++ in an instance with 8,000 points (for $k=2,000$ clusters).

\paragraph{IP-Stability results}
We first discuss five real dataset. We tested the algorithms for the range of $k$ up to $25$. 
The result in \cref{fig:grid_plot_1,fig:grid_plot_2} is consistent with the experiments in the previous paper as we see that $k$-means++ is a very competitive algorithm for these datasets.
For small number of clusters, our algorithm sometimes outperforms $k$-means++. We hypothesize that on these datasets, especially for large $k$, clusters which have low $k$-means cost separate the points well and therefore are good clusters for IP-stability.

Next we discuss the $k$-means++ hard instance. The instance used in \cref{fig:grid_plot_2} was constructed with $\beta = 50$. We vary $k$ but omit the results for higher $k$ values since the outputs from both algorithms are stable. We remark that the empirical results with different $\beta$ gave qualitatively similar results. For maximum and mean violation, our algorithm outperforms $k$-means++ (\cref{fig:grid_plot_2}).

\begin{figure}[ht]
\centering
\includegraphics[width=1\textwidth]{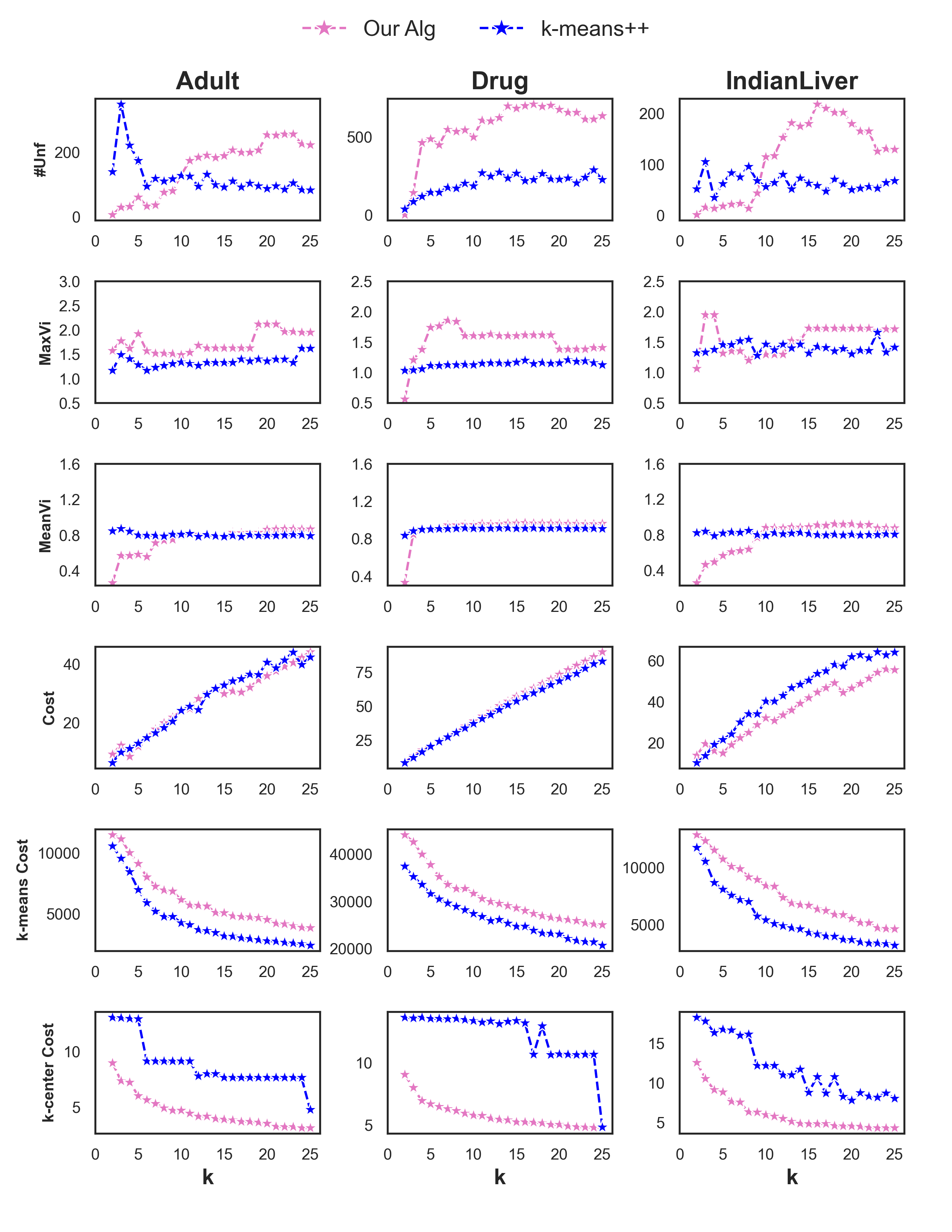}
\vspace{-1mm}
\caption{Experiment results on three datasets used in \cite{ahmadi2022individual}.}
\label{fig:grid_plot_1}
\vspace{-2mm}
\end{figure}

\begin{figure}[ht]
\centering
\includegraphics[width=1\textwidth]{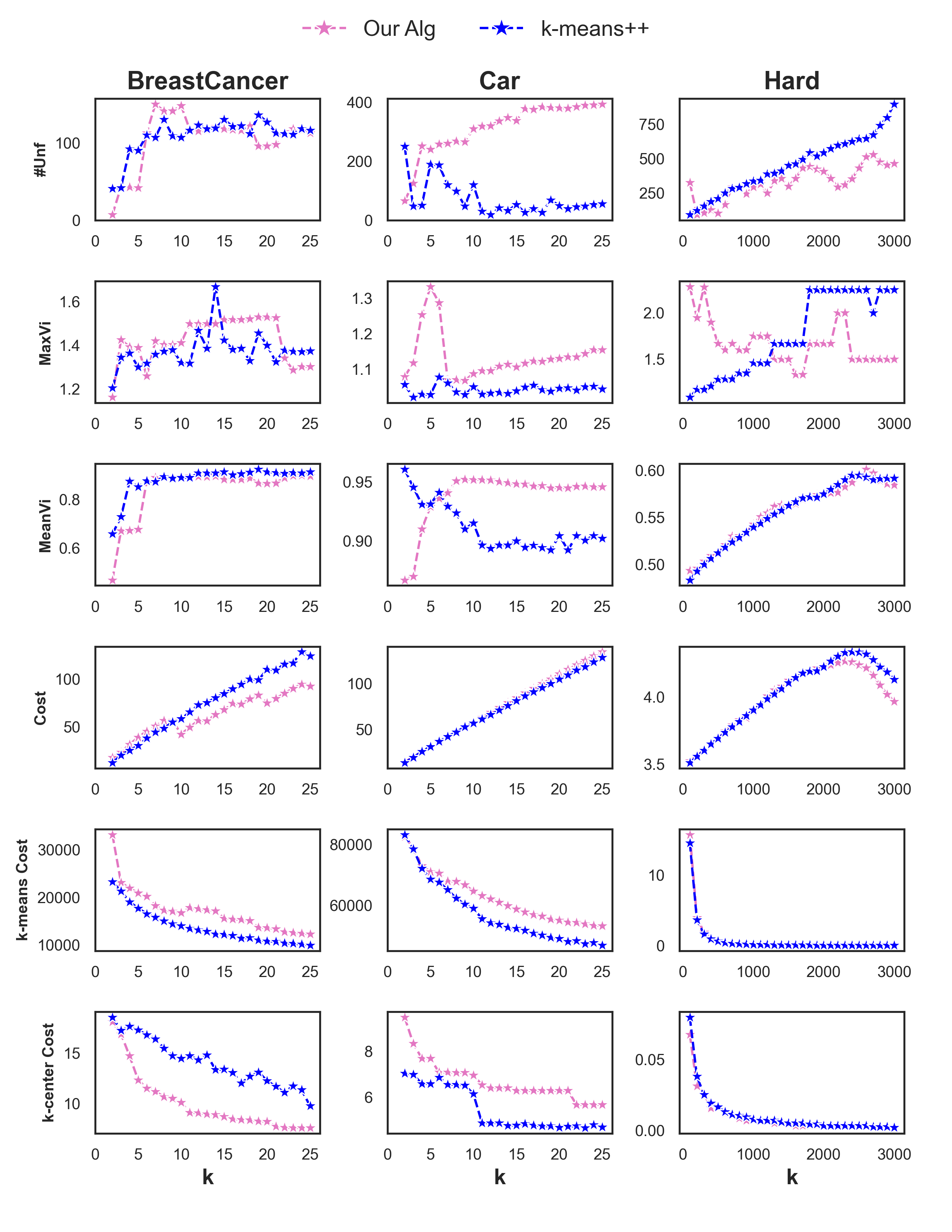}
\vspace{-1mm}
\caption{Additional experiment results on two real datasets and the synthetic dataset.}
\label{fig:grid_plot_2}
\vspace{-2mm}
\end{figure}

\end{document}